\title{Categorical Equivalences from State-Effect Adjunctions}
\author{Robert Furber
\institute{Aalborg University\\ Aalborg, Denmark}
\email{furber@cs.aau.dk}
}
\newif\ifignore
\newcommand{\TC}{\ensuremath{\mathcal{TC}}}
\newcommand{\BOUS}{\cat{BOUS}}
\newcommand{\SOUS}{\cat{SOUS}}
\newcommand{\ROUS}{\cat{ROUS}}
\newcommand{\Di}{\ensuremath{\D_\infty}}
\newcommand{\CCL}{\cat{CCL}}
\newcommand{\RBNS}{\cat{RBNS}}
\newcommand{\Rgt}{\ensuremath{\mathbb{R}_{> 0}}}
\newcommand{\Id}{\ensuremath{\mathrm{Id}}}
\newcommand{\BNS}{\cat{BNS}}
\newcommand{\SBNS}{\cat{SBNS}}
\newcommand{\CEMod}{\cat{CEMod}}
\newcommand{\cat}[1]{\ensuremath{\mathbf{#1}}}
\newcommand{\DM}{\ensuremath{\mathcal{DM}}}
\newcommand{\blank}{\ensuremath{{\mbox{-}}}}
\newcommand{\catl}[1]{\ensuremath{\mathcal{#1}}}
\newcommand{\op}{\ensuremath{^{\mathrm{op}}}}
\newcommand{\R}{\ensuremath{\mathbb{R}}}
\newcommand{\Q}{\ensuremath{\mathbb{Q}}}
\newcommand{\Rdn}{\ensuremath{\mathcal{R}}}
\newcommand{\co}{\ensuremath{\mathrm{co}}}
\newcommand{\Z}{\ensuremath{\mathbb{Z}}}
\newcommand{\Tee}{\ensuremath{\mathcal{T}}}
\newcommand{\id}{\ensuremath{\mathrm{id}}}
\newcommand{\D}{\ensuremath{\mathcal{D}}}
\newcommand{\N}{\ensuremath{\mathbb{N}}}
\newcommand{\Rgeq}{\ensuremath{\mathbb{R}_{\geq 0}}}
\newcommand{\OUS}{\cat{OUS}}
\newcommand{\Eff}{\ensuremath{\mathcal{F}}}
\newcommand{\Hil}{\ensuremath{\mathcal{H}}}
\newcommand{\Proj}{\ensuremath{\mathrm{Proj}}}
\newcommand{\BAff}{\ensuremath{\mathrm{BAff}}}
\newcommand{\EA}{\cat{EA}}
\newcommand{\Stat}{\ensuremath{\mathrm{Stat}}}
\newcommand{\CStat}{\mathrm{CStat}}
\newcommand{\BBNS}{\cat{BBNS}}
\newcommand{\EM}{\ensuremath{\mathcal{E\!M}}}
\newcommand{\Smith}{\cat{Smith}}
\newcommand{\Norm}{\cat{Normed}}
\newcommand{\CHaus}{\cat{CHaus}}
\newcommand{\EMod}{\cat{EMod}}
\newcommand{\Set}{\cat{Set}}
\newcommand{\BoolAlg}{\cat{BA}}
\newcommand{\FinSet}{\cat{FinSet}}
\newcommand{\FinBA}{\cat{FinBA}}
\newcommand{\Stone}{\cat{Stone}}
\newtheorem{theorem}{Theorem}[section]
\newtheorem{lemma}[theorem]{Lemma}
\newtheorem{proposition}[theorem]{Proposition}
\newtheorem{example}[theorem]{Example}
\hskip \labelsep {\bfseries #1}]%
\newcommand{\fin}{\mathrm{fin}}
\renewcommand{\Eff}{\mathcal{E}}
\renewcommand{\Stat}{\mathcal{S}}
\newcommand{\EffS}{\Eff_{\pm}}
\newcommand{\StatS}{\Stat_{\pm}}
\newcommand{\CEff}{\mathcal{CE}}
\renewcommand{\CStat}{\mathcal{CS}}
\newcommand{\CEffS}{\mathcal{CE}_{\pm}}
\begin{document}
\maketitle

\begin{abstract}
From every pair of adjoint functors it is possible to produce a (possibly trivial) equivalence of categories by restricting to the subcategories where the unit and counit are isomorphisms. If we do this for the adjunction between effect algebras and abstract convex sets, we get the surprising result that the equivalent subcategories consist of reflexive order-unit spaces and reflexive base-norm spaces, respectively. These are the convex sets that can occur as state spaces in generalized probabilistic theories satisfying both the no-restriction hypothesis and its dual. The linearity of the morphisms is automatic. If we add a compact topology to either the states or the effects, we can obtain a duality for all Banach order-unit spaces or all Banach base-norm spaces, but not both at the same time.
\end{abstract}

\section{Introduction}
Effect algebras, first defined in \cite{FoulisB94}, are an algebraic structure that puts a common roof over Boolean algebras, orthomodular lattices, and unit intervals of (unital) C$^*$-algebras. Every effect algebra has a state space, which is a convex set. Another approach is to define the state space first, as is done in \cite{barrett07} or to define states and effects simultaneously, as in \cite{barnum2010}. 

Bart Jacobs \cite[Theorem 17]{Jacobs10e} showed that states and effects form a dual adjunction between the category of effect algebras and the category of Eilenberg-Moore algebras of the distribution monad, for which the real unit interval $[0,1]$ is a dualizing object. There is also an earlier, less categorical approach by \cite{beltrametti}, where countable additivity was required for states.  For any adjunction, we can restrict the functors involved to the subcategories where the unit and counit are isomorphisms to obtain a duality (possibly between two empty categories). This produces equivalent categories of effect algebras that are determined by their states and convex state spaces that are determined by their effects. From the point of view of generalized probabilistic theories, the convex sets occurring are those convex sets $X$ that can inhabit generalized probabilistic theories having the \emph{no-restriction hypothesis} on effects \cite[II.D Definition 16]{chiribella2010} as well as the dual of this, \emph{i.e.} all states on the algebra of effects already occur in $X$. 

In order to provide a characterization, we use a relationship of effect algebras and convex sets to an older theory of operational quantum mechanics in terms of base-norm and order-unit spaces \cite{edwards, davies, barnum}. Using nontrivial fullness results, we characterize the subcategories where the unit and counit are isomorphisms as the effect algebras that are unit intervals in reflexive order-unit spaces and the abstract convex sets arising from the bases of reflexive base-norm spaces. We then briefly describe how to modify the state and effect functors by adding topologies to obtain two dual adjunctions, one having Banach base-norm spaces and Smith order-unit spaces as the resulting duality, and the other having Smith base-norm spaces and Banach order-unit spaces as the resulting duality. 

In the appendix we show what form of the archimedean property is necessary for the automatic $\R$-linearity result we use, by means of counterexamples. 

To the reader familiar with Stone duality, we offer the following analogy. A very ``bare-bones'' form of Stone duality can be obtained between $\Set$ and $\BoolAlg$, the category of Boolean algebras, by using $2$, which is both a set and a Boolean algebra, as a dualizing object, with $\Set(\blank,2) : \Set \rightarrow \BoolAlg\op$ being isomorphic to the contravariant powerset functor and $\BoolAlg(\blank,2)$ being isomorphic to the functor taking a Boolean algebra to its set of ultrafilters. The duality is only an adjunction, where $\Set(\blank,2)$ is a left adjoint to $\BoolAlg(\blank, 2)$. 

Restricting this to the subcategories where the unit and counit are isomorphisms produces a duality between $\FinSet$ and $\FinBA\op$, the finite version of Stone duality. In order to work with all Boolean algebras, Stone equipped $X = \BoolAlg(A,2)$ with a compact Hausdorff topology (the Stone space) and restricted the maps $X \rightarrow 2$ to be continuous, producing an equivalence $\Stone \simeq \BoolAlg\op$ \cite[\S 4.4 Corollary]{johnstone}. If instead we want to use all sets, we can put a compact Hausdorff topology on $A = \Set(X,2)$ and require maps $A \rightarrow 2$ to be continuous. This is the duality between complete atomic Boolean algebras and sets \cite[VI \S 3.1]{johnstone}. The dualities in Section \ref{SmithSpaceSection} are analogous to these dualities, but with effect algebras instead of Boolean algebras, and $[0,1]$ instead of $2$. 

\section{Preliminaries}
\label{OrigAdjSection}
In this section we go through the definitions of the various objects involved. Effect algebras, first defined in \cite{FoulisB94}, are an algebraic structure that puts a common roof over Boolean algebras, orthomodular lattices, and unit intervals of C$^*$-algebras. A standard reference is \cite{DvurecenskijP00}, where the definition is given in \cite[Definition 1.2.1]{DvurecenskijP00}. They are defined as follows. A partial commutative monoid is a triple $(A,0,\ovee)$ where $A$ is a set, $0 \in A$ an element, $\ovee : A \times A \rightarrow A$ a partial function, subject to some axioms we shall specify in a moment. We use the notation $a \perp b$ for $a, b \in A$ to mean that $(a,b)$ is in the domain of $\ovee$, \emph{i.e.} that $a \ovee b$ is defined. The axioms for a partial commutative monoid are:
\begin{enumerate}[(i)]
\item Identity axiom: For all $a \in A$, $0 \perp a$ and $0 \ovee a = a$. 
\item Associativity axiom: For all $a,b,c \in A$, if $b \perp c$ and $a \perp (b \ovee c)$, then $a \perp b$, $(a \ovee b) \perp c$, and $a \ovee (b \ovee c) = (a \ovee b) \ovee c$. 
\item Commutativity axiom: For all $a,b \in A$, if $a \perp b$, then $b \perp a$ and $a \ovee b = b \ovee a$. 
\end{enumerate}
In the absence of the commutativity axiom, the identity and associativity axioms would only be half-finished, but the reader may verify that they also hold the other way using commutativity. An \emph{effect algebra} is a quintuple $(A,0,\ovee,\blank^\bot, 1)$ such that $(A,0,\ovee)$ is a partial commutative monoid, $\blank^\bot : A \rightarrow A$ is a (total) function, and $1 \in A$ an element such that
\begin{enumerate}[(i)]
\item $1 = 0^\bot$.
\item For all $a \in A$, $a^\bot \ovee a = 1$ and $a^\bot$ is the unique such element of $A$. 
\item If $a \in A$ and $a \perp 1$, then $a = 0$. 
\end{enumerate}
The operation $\blank^\bot$ is known as the orthosupplement. An orthomodular lattice $(A, 0, 1, \land, \lor, \lnot)$ can be given the structure of an effect algebra by defining $a \perp b$ iff $a \leq \lnot b$ (equivalently, $b \leq \lnot a$) and $a \ovee b = a \lor b$ in this case, keeping $0 = 0$ and $1 = 1$, and defining $a^\perp = \lnot a$. As Boolean algebras are (distributive) orthomodular lattices, we have that every Boolean algebra can be made into an effect algebra in the same manner. The interval $[0,1]$ can be made into an effect algebra by taking $0=0$, $1=1$, $a \perp b$ iff $a + b \leq 1$ and then $a \ovee b = a + b$, and $a^\bot = 1 - a$. 

Morphisms of effect algebras are defined to be maps $A \rightarrow B$ such that $f(1) = 1$ and if $a \perp a'$ in $A$, then $f(a) \perp f(a')$ in $B$ and $f(a \ovee a') = f(a) \ovee f(a')$. The reader may verify that these rules imply $f(a^\bot) = f(a)^\bot$, so in particular, $f(0) = 0$. Effect algebras and their morphisms form a category $\EA$. To aid the reader in understanding the usefulness of making the addition partial, observe that, for any Boolean algebra $A$, the hom-set $\EA(A,[0,1])$ is exactly the set of finitely additive probability measures on $A$. Similarly, if $B$ is an orthomodular lattice, $\EA(B,[0,1])$ is the set of (finitely-additive) states.

We now move on to convex sets. Convex sets, in the abstract version that we use here, are intended to be used to represent mixed state spaces of physical systems. The idea is that if $X$ is a set of states, and $x,y \in X$ are states, we can form the state that is $x$ with probability $p \in [0,1]$ and $y$ with probability $(1-p) \in [0,1]$. This operation defines a mapping $\kappa : [0,1] \times X \times X \rightarrow X$, which is then required to satisfy some natural axioms. Details of how to do this can be found in \cite{Stone1949, Neumann1970, Gudder1973, ozawa80, Fritz09a}. In this article, however, we prefer to use the approach based on monads. Let $\D(X)$ be the set of functions $\phi : X \rightarrow [0,1]$ with finite support, such that $\sum_{x \in X} \phi(x) = 1$. We can extend $\D$ to a functor by defining, for each $f : X \rightarrow Y$ the function $\D(f) : \D(X) \rightarrow \D(Y)$ to be
\[
\D(f)(\phi)(y) = \sum_{x \in f^{-1}(y)} \phi(x).
\]
We can then define the natural transformations $\eta$ and $\mu$ as follows. For $\eta_X : X \rightarrow \D(X)$ we define\footnote{We use $[P]$ for the Iverson bracket, \emph{i.e.} to mean a number that is $1$ if $P$ is true and $0$ if $P$ is false.} $\eta_X(x)(x') = [x = x']$, and for $\mu_X : \D(\D(X)) \rightarrow \D(X)$ we define
\[
\mu_X(\Phi)(x) = \sum_{\phi \in \D(X)} \Phi(\phi) \cdot \phi(x).
\]
With these definitions, $(\D,\eta,\mu)$ is a monad. 

A convex set will be defined to be an \emph{Eilenberg-Moore algebra}\footnote{See \cite[VI.2]{maclane} for definitions and basic theorems.} of $\D$, which is to say, a pair $(X,\alpha : \D(X) \rightarrow X)$ such that
\begin{equation}
\label{DistEADiag}
\begin{tikzcd}
X \ar{r}{\eta_X} \ar{rd}[swap]{\id_X} & \D(X) \ar{d}{\alpha_X} & \D(\D(X)) \ar{r}{\D(\alpha_X)} \ar{d}[swap]{\mu_X} & \D(X) \ar{d}{\alpha_X} \\
 & X & \D(X) \ar{r}[swap]{\alpha_X} & X,
\end{tikzcd}
\end{equation}

Elements of $\D(X)$ have an interpretation as instructions for how to produce a mixed state. For example, we can define a function $\phi : [0,1] \rightarrow [0,1]$ such that $\phi(0) = \frac{1}{2}$, $\phi(\frac{1}{3}) = \frac{1}{2}$, and $\phi(\alpha) = 0$ for all other $\alpha \in [0,1]$. This $\phi$ tells us to use the state $0 \in [0,1]$ with probability $\frac{1}{2}$ and the state $\frac{1}{3}$ with probability $\frac{1}{2}$. The mapping $\alpha : \D(X) \rightarrow X$ implements these instructions, mapping specifications for how to mix states to the actual mixtures. A simple example is $[0,1]$:
\[
\alpha_{[0,1]}(\phi) = \sum_{x \in [0,1]} \phi(x) \cdot x.
\]
Using the example $\phi$ above, we get $\alpha(\phi) = \frac{1}{6}$, as things should be. The physical interpretation of the left hand diagram of \eqref{DistEADiag} is that mixing a state only with itself produces the same state again, and the physical interpretation of the right hand diagram is a kind of associativity -- if we start with a specification for how to mix specifications of mixed states, we get the same result if we form the mixed states at the lower level, and then mix the results (the top right path) or we mix the specifications in the natural way using $\mu_X$ and then form the mixed state using the resulting specification.

One benefit of this method of defining convex sets is that we can change $\D$ to some other monad and get a new kind of convex sets. For instance, if $\Di(X)$ is defined to be functions $\phi : X \rightarrow [0,1]$ that are summable, in the sense that $\sum_{x \in X}\phi(x)$ converges, then $\EM(\Di)$ is the category of $\sigma$-convex sets. It appears that \'Swirszcz was the first to describe convex sets using monads \cite[\S 4]{swirszcz75}. 

We can consider an effect to be a ``fuzzy predicate'', \emph{i.e.} a map from a convex set $X$ to $\D(2)$, where $2$ is a set with two elements. In particular, we can identify these two elements with $0, 1 \in \R$ and $\D(2)$ with $[0,1]$, their convex hull. Therefore the effects functor $\Eff : \EM(\D) \rightarrow \EA\op$ is defined to be the hom-functor $\EM(\D)(\blank,[0,1])$ on underlying sets, and with the effect algebra structure taken to be that of $[0,1]$, defined pointwise \cite[\S 6, Lemma 16]{Jacobs10e}. 

We can consider every mixed state of a physical system described by an effect algebra $A$ to define a map $\phi : A \rightarrow [0,1]$, which maps each effect to its probability of being true when measured. Such a map should take $0$ to $0$ and $1$ to $1$, and be additive on disjoint elements, \emph{i.e.} if $a,b \in A$ are such that $a \ovee b$ exists, then $\phi(a) + \phi(b) \in [0,1]$, and is equal to $\phi(a \ovee b)$. This is equivalent to the map $\phi : A \rightarrow [0,1]$ being a morphism of effect algebras, and so this is the usual definition of a \emph{state} on an effect algebra \cite[Definition 1.3.3]{DvurecenskijP00}. The state functor $\Stat : \EA\op \rightarrow \EM(\D)$ is defined to be the hom functor $\EA(\blank, [0,1])$ on underlying sets, and to use the $\EM(\D)$-structure of $[0,1]$ pointwise to define the $\EM(\D)$-structure of $\EA(A,[0,1])$ \cite[\S 6, Lemma 15]{Jacobs10e}.

Every element $a$ of an effect algebra $A$ defines an effect on its state space, which defines a natural mapping
\begin{align*}
\epsilon_A &: A \rightarrow \Eff(\Stat(A)) \\
\epsilon_A(a)(\phi) &= \phi(a),
\end{align*}
where $\phi \in \Stat(A)$.

Similarly, every element $x$ of an abstract convex set $X$ defines a state of $\Eff(X)$, which defines a natural mapping
\begin{align*}
\eta_X &: X \rightarrow \Stat(\Eff(X)) \\
\eta_X(x)(a) &= a(x),
\end{align*}
where $a \in \Eff(X)$.

If we use the convention that $\Eff : \EM(\D) \rightarrow \EA\op$ and $\Stat : \EA\op \rightarrow \EM(\D)$, then the mappings defined above make $\Eff$ a left adjoint to $\Stat$, \emph{i.e.} the following diagrams commute

\begin{tikzcd}
\Stat(A) \ar{r}{\eta_{\Stat(A)}} \ar{rd}[swap]{\id} & \Stat(\Eff(\Stat(A))) \ar{d}{\Stat(\epsilon_A)} & \Eff(X) & \Eff(\Stat(\Eff(X))) \ar{l}[swap]{\Eff(\eta_X)} \\
 & \Stat(A) & & \Eff(X) \ar{ul}{\id} \ar{u}[swap]{\epsilon_{\Eff(X)}}.
\end{tikzcd}

See \cite[\S 6, Theorem 17]{Jacobs10e} for a proof of this. The reason the arrows are reversed for $\epsilon$ and the second diagram is our choice of $\EA\op$ when turning contravariant functors into covariant ones. The equivalence of the above definition of adjunction to the other definitions of adjunction is proven in \cite[IV.1 Theorem 2]{maclane}.

If we have a pair of adjoint functors $F : \catl{D} \rightarrow \catl{C}$ and $G : \catl{C} \rightarrow \catl{D}$ with unit $\eta : \Id \Rightarrow GF$ and counit $\epsilon : FG \Rightarrow \Id$, we can always define subcategories
\begin{align*}
\catl{C}' &= \{ A \in \catl{C} \mid \epsilon_A \text{ is an isomorphism} \} \\
\catl{D}' &= \{ X \in \catl{D} \mid \eta_X \text{ is an isomorphism} \},
\end{align*}
where the morphisms are just those of $\catl{C}$ and $\catl{D}$ that are between objects in $\catl{C}'$ and $\catl{D}'$ respectively (\emph{i.e.} $\catl{C}'$ and $\catl{D}'$ are \emph{full} subcategories).
Then $F' : \catl{D}' \rightarrow \catl{C}'$ and $G' : \catl{C}' \rightarrow \catl{D}'$, the restrictions of $F$ and $G$, form an equivalence of categories \cite[Part 0, Proposition 4.2]{lambek}. It is natural to ask what $\catl{C}'$ and $\catl{D}'$ are for $\Eff$ and $\Stat$, as this answers the question of which effect algebras are determined by their states and which abstract convex sets are determined by their effects. The purpose of the rest of the article is to answer this question. 

\section{Base-Norm Spaces, Order-Unit Spaces and Full Embeddings}
\label{BNSOUSSection}
Base-norm spaces and order-unit spaces are particular kinds of ordered vector space designed to characterize state spaces and spaces of effects. For background in terms of operational theories in physics, see \cite{edwards,davies,barnum}. Some standard mathematical references are \cite{asimow, alfsen71, alfsen01}. 

A \emph{cone} in a real vector space $E$ is a set $E_+ \subseteq E$ such that if $x,y \in E_+$, then $x + y \in E_+$ and if $\alpha \in \Rgeq$ and $x \in E_+$, then $\alpha x \in E_+$, and $E_+ \cap -E_+ = \{ 0 \}$. Cones are a convenient way of defining orderings on the elements of a vector space. We define $x \leq y$ iff $y - x \in E_+$. Under this definition, $\leq$ is a partial order, $x \geq 0$ iff $x \in E_+$, and the expected implications $x \leq y \Leftrightarrow x + z \leq y + z$ and $x \leq y \Rightarrow \alpha x \leq \alpha y$ for $\alpha \in \Rgeq$ hold. We define an \emph{ordered vector space} to be a pair $(E, E_+)$ where $E$ is a real vector space and $E_+$ is a cone. The order is directed iff the cone is \emph{generating}, \emph{i.e.} $E = E_+ - E_+$. Linear maps $f : (E,E_+) \rightarrow (F,F_+)$ are called positive if $f(E_+) \subseteq F_+$. For linear maps, this is equivalent to monotonicity, \emph{i.e.} if $x \leq y$ then $f(x) \leq f(y)$. If we take $\Rgeq \subseteq \R$ as a cone, we get the usual ordering on $\R$, making $(\R,\Rgeq)$ an ordered vector space. 

A convex set can be embedded in many different vector spaces. The purpose of base-norm spaces is to provide a particularly convenient form for certain convex sets to live in. 

If we have a triple $(E,E_+,\tau)$, where $(E,E_+)$ is a directed ordered vector space, and $\tau : E \rightarrow \R$ is a positive linear map, which, if $E \neq \{0\}$, takes a non-zero value on some element of $E$, then we define the \emph{base}
\[
B(E) = \{ x \in E_+ \mid \tau(x) = 1 \} = \tau^{-1}(1) \cap E_+,
\]
and the \emph{unit ball} $U(E)$ to be the absolutely convex hull\footnote{The absolutely convex hull of a set $X$ is the smallest subset containing $X$ closed under absolutely convex combinations, which are like convex combinations, but using numbers $\alpha_i \in \R$ such that $\sum_{i=1}^n |\alpha_i| \leq 1$ rather than $\alpha_i \in [0,1]$ with $\sum_{i=1}^n \alpha_i = 1$.} (or circled convex hull) of the base. If $B(E)$ is not empty, $U(E) = \co(B(E) \cup -B(E))$, and this is taken as the definition by many authors because they do not consider the case of an empty $B(E)$. If $U(E)$ is \emph{absorbent}\footnote{or \emph{absorbing}}, \emph{i.e.} for each $x \in E$, there exists $\alpha \in \Rgeq$ such that $x \in \alpha U(E)$, then the \emph{Minkowski functional}\footnote{or \emph{gauge} \cite[II.1.4, page 39]{schaefer}} $\|\blank\|_{U(E)}$, defined by
\[
\|x\|_{U(E)} = \inf\{\alpha \in \Rgt \mid x \in \alpha U(E) \}
\]
is a seminorm. We say that $U(E)$ is \emph{radially bounded} (respectively, \emph{radially compact}) if for each element $x \in U(E)$, the set $\{ \alpha \in \R \mid \alpha x \in U(E) \}$ is bounded (respectively, compact) in $\R$. The Minkowski functional is a norm iff $U(E)$ is radially bounded. 

A \emph{base-norm space} is defined to be a triple $(E,E_+,\tau)$, where $(E,E_+)$ is a directed ordered vector space, $\tau$ is a positive linear map $E \rightarrow \R$ that takes a non-zero value if $E \neq \{0\}$, such that $U(E)$ is radially bounded, and $E_+$ is closed with respect to the norm $\|\blank\|_{U(E)}$. If $U(E)$ is radially compact, $E_+$ is automatically closed \cite[Proposition 2.2.6 (ii)]{furberthesis}. Some authors take radial compactness as part of the definition \cite[Definition 1.10]{alfsen01}, though it is not equivalent to the definition used here \cite[Counterexample A.6.2]{furberthesis}.

Though it might not be immediately apparent, $\tau$ is strictly positive, \emph{i.e.} $\tau(x) = 0$ implies $x = 0$ \cite[Lemma 2.2.2]{furberthesis}, and $\|\blank\|_{U(E)}$ is a norm \cite[Lemma 2.2.3]{furberthesis}. For conciseness, we will use $\|\blank\|$ instead of $\|\blank\|_{U(E)}$ from now on, using $\|\blank\|_E$ if we need to specify the space. A \emph{Banach} base-norm space is one that is complete with respect to its norm. 

The map $\tau$ is called the \emph{trace}. There are equivalent definitions of base-norm space where the base $B(E)$ is part of the definition and $\tau$ is derived from it. There are also inequivalent definitions in use, which have their own reasons for existing, but are not suited to duality between states and effects \cite[\S 2.2.3]{furberthesis}.

Base-norm spaces form a category $\BNS$ with linear, positive, trace-preserving maps, \emph{i.e.} linear positive maps $f : (E,E_+,\tau) \rightarrow (F,F_+,\sigma)$ such that $\sigma \circ f = \tau$. Banach base-norm spaces form a full subcategory $\BBNS$. 

It may help to give an example of a base-norm space. If we take $E = \R^X_\fin$ to be the direct sum of $\R$, using the index set $X$ (\emph{i.e.} $E$ is the finitely supported functions $X \rightarrow \R$), and define $E_+ = \{ \phi \in E \mid \forall x \in X . \phi(x) \geq 0 \}$, and $\tau(\phi) = \sum_{x \in X}\phi(x)$, then $(E,E_+,\tau)$ is a base-norm space with base $\D(X)$. Furthermore, its completion is the space of absolutely summable sequences $\ell^1(X)$, which is a base-norm space with base $\Di(X)$. 

One of the examples that motivated the definition of base-norm space was the space of self-adjoint trace-class operators $\TC(\Hil)$ on a Hilbert space $\Hil$. The positive cone is the set of positive (semidefinite) operators, and the trace is the usual notion of trace. The base of $(\TC(\Hil), \TC(\Hil)_+, \tau)$ is the convex set of density matrices, $\DM(\Hil)$ \cite{edwards} \cite[\S 2.3]{furberthesis}. In fact, this example and the previous example of $\ell^1(X)$ are both examples of self-adjoint parts of the preduals of W$^*$-algebras, the W$^*$-algebras being $B(\Hil)$ and $\ell^\infty(X)$ respectively. 

The base $B(E)$ of a base-norm space $(E,E_+,\tau)$ is a convex subset of $E$, and so it has an $\EM(\D)$-structure $\alpha_{B(E)} : \D(B(E)) \rightarrow B(E)$ defined by
\[
\alpha_{B(E)}(\phi) = \sum_{x \in B(E)} \phi(x) \cdot x,
\]
which is to say, we interpret \emph{formal} convex combinations as \emph{actual} convex combinations. As trace-preserving maps map bases to bases, $B$ is a functor $B : \BNS \rightarrow \EM(\D)$. In fact, in Banach base-norm spaces, the base has $\sigma$-convex combinations, so $B$ can be considered to be a functor $B : \BBNS \rightarrow \EM(\Di)$. 

The following is an extension of the result \cite[Theorem 3.4]{Gudder1973} from endomorphisms to arbitrary morphisms. It is also related to \cite[Theorem 3.3]{pumplun02}, but that paper uses a different definition of base-norm space (the positive cone is not required to be closed). A version of the following already appeared in \cite[Proposition 2.4.8]{furberthesis}. 
\begin{proposition}
\label{ConvFullnessProp}
The functor $B : \BNS \rightarrow \EM(\D)$ is full and faithful.
\end{proposition}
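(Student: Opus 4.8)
The plan is to treat faithfulness by a short spanning argument and fullness by explicitly constructing the linear extension of an affine map between bases, the only subtle point being that this extension turns out to be additive on the positive cone.

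For faithfulness, recall that in a base-norm space $(E, E_+, \tau)$ the cone $E_+$ is generating and $\tau$ is strictly positive. Hence every nonzero $x \in E_+$ has $\tau(x) > 0$ and can be written $x = \tau(x) \cdot (x / \tau(x))$ with $x/\tau(x) \in B(E)$, so that $E_+ = \{0\} \cup \Rgt \cdot B(E)$ and $E = E_+ - E_+$. If two $\BNS$-morphisms $f, g : E \to F$ satisfy $B(f) = B(g)$, they agree on $B(E)$, hence on $E_+$ by positive homogeneity, hence on all of $E$ by linearity; so $f = g$.

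For fullness, let $h : B(E) \to B(F)$ be an $\EM(\D)$-morphism. I would define $\hat h : E \to F$ by $\hat h(0) = 0$, by $\hat h(\lambda b) = \lambda\, h(b)$ for $\lambda \in \Rgt$ and $b \in B(E)$, and by $\hat h(x_+ - x_-) = \hat h(x_+) - \hat h(x_-)$ for $x_\pm \in E_+$. Well-definedness on $E_+$ is immediate from the uniqueness of the decomposition $x = \tau(x)\cdot(x/\tau(x))$ noted above together with strict positivity of $\tau$. The key step is additivity of $\hat h$ on $E_+$: writing $x = \lambda b$, $y = \mu c$ with $\lambda,\mu \in \Rgt$ and $b,c \in B(E)$, one has $x + y = (\lambda + \mu)\left( \tfrac{\lambda}{\lambda+\mu} b + \tfrac{\mu}{\lambda+\mu} c \right)$, where the vector in parentheses is a binary convex combination lying in $B(E)$; since $h$ is an $\EM(\D)$-morphism it preserves this convex combination, whence $\hat h(x + y) = \lambda\, h(b) + \mu\, h(c) = \hat h(x) + \hat h(y)$ (the degenerate cases where $x$ or $y$ vanishes being trivial). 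Given additivity on $E_+$, well-definedness of $\hat h$ on $E = E_+ - E_+$ follows because $E$ is a group under addition, and additivity and $\R$-homogeneity of $\hat h$ on all of $E$ are then routine.

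It remains to verify that $\hat h$ is a $\BNS$-morphism with $B(\hat h) = h$. Positivity holds since $h(b) \in B(F) \subseteq F_+$; trace preservation follows from $\sigma(h(b)) = 1 = \tau(b)$ for $b \in B(E)$ together with linearity of $\hat h$, $\tau$ and $\sigma$; and $B(\hat h) = h$ because $\hat h(b) = 1 \cdot h(b) = h(b)$ for every $b \in B(E)$, while any linear map automatically commutes with the $\EM(\D)$-structures, which are given by actual convex combinations. I expect the additivity argument to be the only point requiring care, and note that it uses neither the closedness of $E_+$ nor the radial boundedness of $U(E)$: only directedness of the order, strict positivity of $\tau$, and the fact that $\EM(\D)$-morphisms preserve binary convex combinations are needed.
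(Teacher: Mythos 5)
Your proof is correct and follows essentially the same route as the paper's: both faithfulness arguments use that $B(E)$ spans $E$, and both fullness arguments hinge on rescaling a positive combination of base points into a single convex combination lying in $B(E)$ and invoking affineness of the given map. The only difference is organizational---you establish additivity on the cone first (via the canonical decomposition $x=\tau(x)\cdot(x/\tau(x))$, which makes well-definedness on $E_+$ automatic) and then extend to $E=E_+-E_+$ by the standard group-completion argument, whereas the paper defines the extension directly on differences $\alpha x_+ - \beta x_-$ and checks well-definedness there using the trace identity $\alpha+\beta'=\alpha'+\beta$; both work, and your closing observation that neither closedness of $E_+$ nor radial boundedness of $U(E)$ is used is accurate.
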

\begin{proof}
As it is easier, we prove that it is faithful first. Let $f,g : E \rightarrow F$ in $\BNS$. If $E = \{ 0 \}$, then $f = g$ by linearity, so we may assume that $E \neq \{ 0 \}$. Then the linear span of $B(E)$ is $E$, so $B(f) = B(g)$, \emph{i.e.} $f$ agrees with $g$ on $B(E)$, implies $f = g$, by linearity again. 

To show that $B$ is full, let $f : B(E) \rightarrow B(F)$ be an $\EM(\D)$-map, \emph{i.e.} an affine map, where $(E,E_+,\tau)$ and $(F,F_+,\sigma)$ are base-norm spaces. If $E = \{ 0 \}$, then $B(E) = \emptyset$ and $f$ is the empty function. Taking $g : \{0\} \rightarrow F$ to be the unique linear map, it is positive and trace preserving and $B(g) = f$. So we now assume that $E \neq \{0\}$. 

Every element $x \in E$ can be expressed as $\alpha x_+ - \beta x_-$, with $\alpha,\beta \in \Rgeq$ and $x_+,x_- \in B(E)$. We define
\[
g(\alpha x_+ - \beta x_-) = \alpha f(x_+) - \beta f(x_-).
\]
We must first show that this is well-defined. Suppose that $\alpha x_+ - \beta x_- = \alpha' x_+' - \beta' x_-'$. Then $\alpha x_+ + \beta' x_-' = \alpha'x_+' + \beta x_-$, and taking the trace, $\alpha + \beta' = \alpha' + \beta$. Define $\gamma = \alpha + \beta'$. If $\gamma = 0$, then $\alpha = -\beta'$, so both are zero because they are nonnegative. Likewise $\alpha' = \beta = 0$. Therefore $g(\alpha x_+ - \beta x_-) = 0 = g(\alpha'x_+' - \beta'x_-')$. 

If, on the other hand, $\gamma \neq 0$, then the convex combinations $\frac{\alpha}{\gamma}x_+ + \frac{\beta'}{\gamma}x_-'$ and $\frac{\alpha'}{\gamma}x_+' + \frac{\beta}{\gamma}x_-$ exist and define equal elements of $B(E)$. As $f$ is an affine map, we have
\[
\frac{\alpha}{\gamma}f(x_+) + \frac{\beta'}{\gamma}f(x_-') = f\left(\frac{\alpha}{\gamma}x_+ + \frac{\beta'}{\gamma}x_-'\right) = f\left(\frac{\alpha'}{\gamma}x'_+ + \frac{\beta}{\gamma}x_-\right) = \frac{\alpha'}{\gamma}f(x_+') + \frac{\beta}{\gamma}f(x_-).
\]
If we then multiply through by $\gamma$, and subtract from both sides, we can conclude
\[
g(\alpha x_+ - \beta x_-) = \alpha f(x_+) - \beta f(x_-) = \alpha'f(x_+') - \beta' f(x_-') = g(\alpha' x_+' - \beta' x_-').
\]
The map $g$ can be proved to preserve addition by decomposing the elements involved as $\alpha x_+ - \beta x_-$, and using a scale factor to use the fact that $f$ is an affine map. The proofs that $g$ preserves scalar multiplication, positivity, and the trace are simple applications of its definition, as is the proof that $B(g) = f$. Therefore $B$ is full. 
\end{proof}

A consequence of the previous result is that all affine morphisms between bases of Banach base-norm spaces are also $\sigma$-affine. 

We now move on to order-unit spaces. These are vector spaces that are to effect algebras what base-norm spaces are to convex sets. If $(A,A_+)$ is an ordered vector space, then an element $u \in A_+$ is said to be a \emph{strong order unit} if for all $a \in A$, there exists an $\alpha \in \Rgeq$ such that $-\alpha u \leq a \leq \alpha u$. Note that if an ordered vector space has a strong order unit, it must be directed. We say that $(A,A_+,u)$, where $u$ is a strong order unit, is \emph{archimedean} if $a \leq \frac{1}{n} u$ for all $n \in \N$ implies that $a \leq 0$ (\emph{i.e.} $x \in -A_+$). Be warned that this is strictly stronger than the condition that one might naturally call archimedeanness, that if $0 \leq a \leq \frac{1}{n} u$, then $a = 0$. 

An \emph{order-unit space} is a triple $(A,A_+,u)$ such that $(A,A_+)$ is an ordered vector space, $u$ is an archimedean strong order unit. We can define the unit interval $[0,u] = \{a \in A \mid 0 \leq a \leq u \}$, and the unit ball $[-u,u] = \{a \in A \mid -u \leq a \leq u \}$. The unit ball is a radially bounded absolutely convex set, so the Minkowski functional $\|\blank\|_{[-u,u]}$ is a norm. If an order-unit space is complete in this norm, we say that it is a \emph{Banach order-unit space}. Typical examples of Banach order-unit spaces are the spaces of self-adjoint elements of unital C$^*$-algebras, for example the self-adjoint part of $B(\Hil)$ for a Hilbert space $\Hil$ or the space of real-valued continuous functions $C(X)$ on a compact Hausdorff space $X$. 

Order-unit spaces form a category $\OUS$ with linear, positive maps that preserve the unit element. Banach order-unit spaces are a full subcategory $\BOUS$. 

For each order-unit space $(A,A_+,u)$, we can equip its unit interval $[0,1]_A$ with the structure of an effect algebra as follows. We take $1$ to be the unit $u$, $0$ to be the element $0$, $a^\perp = u - a$, and define $a \ovee b$ iff $a + b \leq u$, in which case it is defined to be $a + b$. If applied to the order-unit space $(\R,\Rgeq,1)$, this produces the usual effect algebra structure on $[0,1]$. As positive unital maps of order-unit spaces preserve the unit interval, this defines a functor $[0,1]_\blank : \OUS \rightarrow \EA$. Such effect algebras are modules over $[0,1]$, and so are special cases of \emph{convex effect algebras} defined in \cite{PulmannovaG98}, and called effect modules in \cite[\S 3]{JacobsM12b}, where they given a different definition by treating $[0,1]$ as in internal monoid in $\EA$.

\begin{proposition}
\label{OUSFullnessProp}
The functor $[0,1]_\blank : \OUS \rightarrow \EA$ is full and faithful.
\end{proposition}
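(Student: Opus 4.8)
The plan is to mirror the proof of Proposition~\ref{ConvFullnessProp}, with the linear span of the unit interval $[0,u]$ playing the role of the linear span of the base and the archimedean axiom doing the work that closedness of the positive cone did there. For faithfulness, suppose $f,g : (A,A_+,u) \to (B,B_+,v)$ in $\OUS$ agree on $[0,u]$. Since $u$ is a strong order unit, for each $a \in A$ there is $\alpha \in \Rgeq$ with $-\alpha u \le a \le \alpha u$; if $\alpha = 0$ then $a = 0$, and otherwise $a = 2\alpha\bigl(\tfrac{1}{2\alpha}(a+\alpha u)\bigr) - \alpha u$ exhibits $a$ as a linear combination of two elements of $[0,u]$. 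So $[0,u]$ spans $A$ linearly, and $f = g$ follows by linearity.

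For fullness, let $h : [0,u]_A \to [0,v]_B$ be a morphism of effect algebras. First I would extend $h$ to $A_+$: for $a \in A_+$ choose $n \in \N$ with $a \le nu$, so $\tfrac{a}{n} \in [0,u]$, and set $g(a) = n\,h(\tfrac{a}{n})$. This is independent of $n$: comparing two choices $n,n'$ via $nn'$ and writing $\tfrac{a}{n} = \tfrac{a}{nn'} \ovee \cdots \ovee \tfrac{a}{nn'}$ ($n'$ summands), the fact that $h$ respects $\ovee$ gives $n\,h(\tfrac{a}{n}) = nn'\,h(\tfrac{a}{nn'}) = n'\,h(\tfrac{a}{n'})$. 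If $a,b \in A_+$ and $n$ is chosen with $a+b \le nu$, then $\tfrac{a}{n} \perp \tfrac{b}{n}$, so $g(a+b) = n\,h(\tfrac{a}{n} \ovee \tfrac{b}{n}) = g(a) + g(b)$; thus $g$ is additive on $A_+$, hence $\Q_{\ge 0}$-homogeneous, and it is positive because $h$ takes values in $[0,v] \subseteq B_+$. As $u$ is a strong order unit, $A_+$ is generating, so the usual Grothendieck-group argument extends $g$ uniquely to a $\Q$-linear, positive map $g : A \to B$, with $g(u) = h(u) = v$ since $h$ preserves the top element.

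The hard part is upgrading $\Q$-linearity to $\R$-linearity; this is the only step that genuinely uses the strong form of the archimedean axiom, and it is precisely the phenomenon examined by the counterexamples in the appendix. Given $a \in A_+$ and $\lambda \in \Rgeq$, for rationals $p < \lambda < q$ we have $pa \le \lambda a \le qa$, hence $p\,g(a) \le g(\lambda a) \le q\,g(a)$. Picking $\beta \in \Rgeq$ with $g(a) \le \beta v$ (possible since $v$ is a strong order unit in $B$) and then taking $p,q$ within $\tfrac{1}{m\beta}$ of $\lambda$, the element $c = g(\lambda a) - \lambda\,g(a)$ satisfies $-\tfrac{1}{m}v \le c \le \tfrac{1}{m}v$ for every $m \in \N$; applying the archimedean property of $B$ to $c$ and to $-c$ yields $c = 0$. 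Extending over sign changes and over $A = A_+ - A_+$ by linearity shows $g$ is $\R$-linear, so $g \in \OUS(A,B)$; since $g$ restricts to $h$ on $[0,u]$ we get $[0,1]_g = h$, and faithfulness makes $g$ the unique such preimage.

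I expect the $\R$-linearity argument to be the main obstacle: the rest is the kind of bookkeeping already carried out in Proposition~\ref{ConvFullnessProp}, whereas the point here is that no topological hypothesis is needed, only the (strong) archimedean axiom on $A$ together with the presence of an order unit in the target space to bound $g(a)$.
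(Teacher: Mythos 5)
Your proof is correct, and for faithfulness and for the extension of $h$ to a positive unital group homomorphism $g : A \to B$ it coincides with the paper's argument (the same $n\,h(a/n)$ definition on $A_+$, the same $nn'$ well-definedness trick, the same Grothendieck-style extension to $A = A_+ - A_+$, the same derivation of $\Q$-linearity). Where you diverge is the final upgrade from $\Q$- to $\R$-linearity. The paper goes through topology: it shows $h$ maps $[-\tfrac{1}{n}u,\tfrac{1}{n}u]$ into $[-\tfrac{1}{n}v,\tfrac{1}{n}v]$, invokes the fact that a group homomorphism of topological groups continuous at $0$ is continuous, and then passes to the limit along a rational sequence $\alpha_i \to \alpha$, using that the order-unit norm on $B$ is Hausdorff so limits are unique. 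You instead sandwich $c = g(\lambda a) - \lambda g(a)$ order-theoretically between $\pm\tfrac{1}{m}v$ for all $m$ and kill it by applying the archimedean axiom to $c$ and $-c$. This is the same mechanism in different clothing --- the archimedean property is exactly what makes the order-unit seminorm a norm, and your sandwich is the continuity estimate made explicit --- but your version is more elementary (no appeal to topological groups or to uniqueness of limits) and it makes visible that only the \emph{almost} archimedean property of the codomain $B$ is used, which is precisely the sharp hypothesis identified in the appendix. One small point worth stating explicitly: what your sandwich uses is $g(a) \le \beta v$ together with $(q-p)\beta \le \tfrac{2}{m}$, so the bound is $\pm\tfrac{2}{m}v$ rather than $\pm\tfrac{1}{m}v$; this is harmless, and the degenerate case $g(a)=0$ is handled directly by antisymmetry of the order.
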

\begin{proof}
Again, it is simpler to prove that it is faithful first. Let $f,g : (A,A_+,u) \rightarrow (B, B_+, v)$ be positive unital maps of order-unit spaces such that $[0,1]_f = [0,1]_g$, \emph{i.e.} $f$ and $g$ agree on $[0,1]_A$. The fact that $u$ is a strong order unit implies that $A$ is the linear span of $[0,1]_A$, and so the linearity of $f$ and $g$ implies that they are equal. 

We now prove that $[0,1]_\blank$ is full. So let $f : [0,1]_A \rightarrow [0,1]_B$ be an effect algebra homomorphism. We proceed in steps, first showing that it extends to a monoid homomorphism $g : A_+ \rightarrow B_+$, then to a positive unital group homomorphism $h : A \rightarrow B$, then that this is necessarily a $\Q$-linear map, that it is continuous, and therefore that it is in fact $\R$-linear, so a positive linear map of order-unit spaces. 

We define $g$ as follows. Since $u$ is a unit, we have that for all $x \in A_+$ there is an $n \in \N$ such that $x \leq nu$, and therefore $\frac{1}{n}x \in [0, u] = X$. Then $g(x) = n f\left(\frac{1}{n}x\right)$.
First we must show that this is well-defined. Let $m,n \in \N$ such that $\frac{1}{m}x, \frac{1}{n}x \in [0,u]$. Then
\[
n f\left(\frac{1}{n}x\right) = n f\left(m \cdot \frac{1}{mn} x \right) = n m f\left(\frac{1}{mn}x \right) = m f \left(n \cdot \frac{1}{mn} x \right) = m f \left( \frac{1}{m} x \right),
\]
which establishes well-definedness.

Now, $g(0) = f(1 \cdot 0) = f(0) = 0$ because $f$ is a morphism of effect algebras. Now, consider $a, b \in A_+$. There is some $n \in \N$ such that $\frac{1}{n}(a + b) \in [0, u]$. We have
\[
g(a+b) = n f\left(\frac{1}{n}(a + b) \right) = n f\left(\frac{1}{n}a + \frac{1}{n}b \right)
\]
By monotonicity of $+$, $\frac{1}{n}a$ and $\frac{1}{n}b$ are in $[0,u]$ too, and the sum is an effect algebra sum, so since $f$ is an effect algebra homomorphism, we can continue
\[
 = n \left( f\left(\frac{1}{n}a\right) + f\left(\frac{1}{n}b\right)\right) = n f\left(\frac{1}{n}a\right) + n f\left(\frac{1}{n}b\right) = g(a) + g(b).
\]
This concludes the proof that $g$ is a monoid homomorphism. 

We can then define the extension to a group homomorphism $h : A \rightarrow B$, which is necessarily positive and unital because it agrees with $g$ on $A_+$ and with $f$ on $[0,1]_A$, as follows. Recall that each element $a$ of $A$ can be expressed as $a_+ - a_-$, where $a_+, a_- \in A_+$.  We define $h(a) = g(a_+) - g(a_-)$. The proof that this is well-defined is similar to, but simpler than, the proof of the well-definedness of $g$ in Proposition \ref{ConvFullnessProp}, so is omitted. We have $h(0) = h(0 - 0) = g(0) - g(0) = 0$ because $g$ is a monoid homomorphism. Now, if $a, b \in A$,
\begin{align*}
h(a + b) &= h((a_+ - a_-) + (b_+ - b_-)) = h((a_+ + b_+) - (a_- + b_-)) \\
 &= g(a_+ + b_+) - g(a_- + b_-) = g(a_+) + g(b_+) - g(a_-) - g(b_-) \\
 &= g(a_+) - g(a_-) + g(b_+) - g(b_-) = h(a_+ - a_-) + h(b_+ - b_-) \\
 &= h(a) + h(b),
\end{align*}
showing that $h$ is a group homomorphism.

We now show that $h$ preserves multiplication by rational numbers, and so is $\Q$-linear. Let us consider the case of $\frac{1}{m}$ for $m \in \N$ first. We first observe that
\[
m \cdot h\left(\frac{1}{m} a \right) = \sum_{i = 1}^m h \left(\frac{1}{m} a \right) = h \left( \sum_{i = 1}^m \frac{1}{m} a \right) = h(a).
\]
We can conclude from this that $h\left(\frac{1}{m}a\right) = \frac{1}{m}h(a)$. In fact, since $h$ is a group homomorphism, it preserves negation and so this is true for $m \in \Z \setminus \{0\}$. If we now consider the case of a general rational number $\frac{n}{m} \in \Q$, taking the denominator to be negative if the number is negative, we have
\[
f\left(\frac{n}{m}a\right) = f \left( \sum_{i = 1}^n \frac{1}{m} a \right) = \sum_{i =1}^n f\left(\frac{1}{m} a \right) = n \cdot \frac{1}{m} f(a) = \frac{n}{m} f(a).
\]

We now establish the continuity of $h$. It is helpful at this point to recall some useful facts. For a normed space $E$ with closed unit ball $U$, $\{\frac{1}{n}U\}_{n \in \N}$ is a neighbourhood basis for $0$. A group homomorphism between two topological groups is continuous iff it is continuous at the identity element \cite[\S III.8, Proposition 23]{bourbaki}. So we show $h$ is continuous by showing that it is continuous at $0$, and we can do this by showing that for every $n \in \N$, there is an $m \in \N$ such that
\[
h\left(\left[-\frac{1}{m}u,\frac{1}{m}u\right]\right) \subseteq \left[-\frac{1}{n}v, \frac{1}{n}v\right].
\]
We pick $m = n$. Suppose $a \in \left[-\frac{1}{n}u,\frac{1}{n}u\right]$. Then
\begin{align*}
-\frac{1}{n}u \leq &a \leq \frac{1}{n}u & \Rightarrow \\
h\left(-\frac{1}{n}u\right) \leq &h(a) \leq h\left(\frac{1}{n}u\right) & \Rightarrow \\
-\frac{1}{n}h(u) \leq &h(a) \leq \frac{1}{n}h(u) & \Rightarrow \\
-\frac{1}{n}v \leq &h(a) \leq \frac{1}{n}v & \Leftrightarrow \\
h(a) &\in \left[-\frac{1}{n}v, \frac{1}{n}v \right].
\end{align*}
This establishes the continuity of $h$. This is now enough to show that $h$ is $\R$-linear. Let $\alpha \in \R$, with $(\alpha_i)_{i \in \N}$ a sequence of rationals such that $\lim_{i \to \infty} \alpha_i = \alpha$. We have
\begin{align*}
h(\alpha a) &= h \left( \lim_{i \to \infty}\alpha_i a \right) & \text{ continuity of scalar multiplication} \\
 &= \lim_{i \to \infty} h(\alpha_i a) & h \text{ continuous} \\
 &= \lim_{i \to \infty} \alpha_i h(a) & h \text{ $\Q$-linear} \\
 &= \alpha h(a) & \text{ continuity of scalar multiplication.}
\end{align*}
So we have a map $h \in \OUS(A,B)$ such that $[0,1]_h = f$, proving that $[0,1]_\blank$ is full. 
\end{proof}

\section{The Dual Adjunction and Reflexivity}
Now we relate section \ref{BNSOUSSection} back to the adjunction in section \ref{OrigAdjSection}. We do this by showing how $\Eff$ and $\Stat$ can be defined in terms of order-unit and base-norm spaces. 

Given an effect algebra $A$, we define the \emph{signed state space} $\StatS(A)$ as follows. Let $\R^A$ be the vector space of real-valued functions $A \rightarrow \R$, let 
\[
\Stat_+(A) = \{ \phi \in \R^A \mid \exists \alpha \in \Rgeq, \psi \in \Stat(A) . \phi = \alpha \cdot \psi \},
\]
and $\StatS(A)$ be the $\R$-linear span of $\Stat_+(A)$, and define $\tau : \StatS(A) \rightarrow \R$ by $\tau(\phi) = \phi(1)$. If $f : A \rightarrow B$ is an effect algebra homomorphism, we define $\StatS(f)(\phi) = \phi \circ f$. 

\begin{lemma}
\label{StatSDefLemma}
For each effect algebra $A$, $(\StatS(A), \Stat_+(A), \tau)$ is a Banach base-norm space with base $\Stat(A)$. With the above definition on morphisms, $\StatS$ is a functor $\EA\op \rightarrow \BBNS$. The space $\StatS(A)$ admits a locally convex topology in which each element of $A$ defines a continuous functional and $\Stat(A)$ is compact, and such that $\StatS(f)$ is continuous for each effect algebra homomorphism $f : A \rightarrow B$. 
\end{lemma}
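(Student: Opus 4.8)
The plan is to realise $\StatS(A)$ as a linear subspace of the function space $\R^A$ and to work with the weak topology $\sigma := \sigma(\StatS(A),A)$ generated by the evaluation functionals $\hat a\colon\phi\mapsto\phi(a)$, $a\in A$; equivalently, this is the topology of pointwise convergence inherited from the product topology on $\R^A$. Under it $\Stat(A)$ is compact, being cut out of the compact space $[0,1]^A$ by the closed conditions $\phi(1)=1$ and $\phi(a\ovee b)=\phi(a)+\phi(b)$ for $a\perp b$; it is also convex, and nonempty unless $A$ has no states, in which case $\StatS(A)=\{0\}$ and the lemma is trivial, so assume $\Stat(A)\neq\emptyset$ henceforth.

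First I would handle the ordered-vector-space structure. One checks that $\Stat_+(A)$ is a cone: closure under addition comes from rescaling a sum of two nonnegative multiples of states to a single multiple of a convex combination, closure under nonnegative scalars is immediate, and $\Stat_+(A)\cap-\Stat_+(A)=\{0\}$ because evaluating at $1$ forces the scalars to vanish. The linear span of a cone is its difference set, so $\StatS(A)=\Stat_+(A)-\Stat_+(A)$ is directed; $\tau$ is visibly positive and linear and takes the value $1$ on every state, hence is nonzero whenever $\StatS(A)\neq\{0\}$; and $B(\StatS(A))=\tau^{-1}(1)\cap\Stat_+(A)=\Stat(A)$, once more by inspecting $\tau(\alpha\psi)=\alpha$. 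That the Minkowski functional of $U:=U(\StatS(A))=\co(\Stat(A)\cup-\Stat(A))$ is a genuine norm — i.e.\ that $U$ is radially bounded — follows from $\|\phi\|_U\ge\sup_{a\in A}|\phi(a)|$, since $\phi\in\lambda U$ means $\phi=\lambda(s\psi-t\chi)$ with $\psi,\chi\in\Stat(A)$, $s+t\le1$, so $|\phi(a)|\le\lambda$; a nonzero function then has nonzero norm. The cone is norm-closed because it is already $\sigma$-closed (if $\alpha_i\psi_i\to\phi$ pointwise then $\alpha_i=\tau(\alpha_i\psi_i)\to\tau(\phi)=:\alpha\ge0$, and either $\alpha=0$ and $\phi=0$, or $\alpha>0$ and $\psi_i\to\tfrac1\alpha\phi\in\Stat(A)$ by closedness of $\Stat(A)$) and $\sigma$ is coarser than the norm topology by the same estimate.

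The step I expect to be the main obstacle is completeness; I would obtain it, along with the stronger radial compactness of $U$, by showing $U$ is $\sigma$-compact. Indeed, $c\colon\Delta\times\Stat(A)\times\Stat(A)\to\StatS(A)$, $(s,t,\psi,\chi)\mapsto s\psi-t\chi$ with $\Delta=\{(s,t):s,t\ge0,\ s+t\le1\}$, is $\sigma$-continuous with image exactly $U$: collecting positive and negative terms of a convex combination of elements of $\Stat(A)\cup-\Stat(A)$ and using convexity of $\Stat(A)$ puts it in the form $s\psi-t\chi$ with $s+t=1$, and conversely for $s+t<1$ one writes $s\psi-t\chi=s\psi+t(-\chi)+(1-s-t)(\tfrac12\psi-\tfrac12\psi)$. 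As the domain of $c$ is compact, $U$ is $\sigma$-compact, hence $\sigma$-closed and radially compact (for $\phi\neq0$ the set $\{\lambda:\lambda\phi\in U\}$ is closed and, by the norm estimate, bounded). Since $\sigma$ is a Hausdorff vector topology coarser than the norm topology and $U$ is $\sigma$-compact, $\StatS(A)$ is complete: a norm-Cauchy sequence lies in some $\sigma$-compact set $RU$, so has a $\sigma$-cluster point $\phi$, which is its norm limit because each $\varepsilon U$ is $\sigma$-closed and eventually $\phi_m-\phi_n\in\varepsilon U$. Thus $(\StatS(A),\Stat_+(A),\tau)$ is a Banach base-norm space with base $\Stat(A)$.

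Finally, for an effect-algebra morphism $f\colon A\to B$ the map $\StatS(f)\colon\phi\mapsto\phi\circ f$ sends a state on $B$ to a state on $A$, hence carries $\Stat_+(B)$ into $\Stat_+(A)$ via $\alpha\psi\mapsto\alpha(\psi\circ f)$; it is $\R$-linear and trace-preserving as $(\phi\circ f)(1)=\phi(1)$, and norm-contractive (rescale a representation $\phi=\alpha\phi_+-\beta\phi_-$), so a morphism of $\BBNS$. Functoriality $\StatS(\id)=\id$ and $\StatS(g\circ f)=\StatS(f)\circ\StatS(g)$ is immediate from associativity of composition, so $\StatS$ is a functor $\EA\op\to\BBNS$. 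The topology in the last sentence is $\sigma$: it is locally convex (a weak topology), each $a\in A$ acts as the $\sigma$-continuous functional $\hat a$ by construction, $\Stat(A)$ is $\sigma$-compact as above, and $\StatS(f)$ is $\sigma$-continuous because $\hat a\circ\StatS(f)=\widehat{f(a)}$ is $\sigma$-continuous for each $a\in A$.
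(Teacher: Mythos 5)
Your proof is correct and takes essentially the same route as the paper: realise $\StatS(A)$ inside $\R^A$ with the product (pointwise) topology, get compactness of $\Stat(A)$ from Tychonoff plus the closed conditions $\phi(1)=1$ and $\phi(a\ovee b)=\phi(a)+\phi(b)$, obtain radial compactness of the unit ball as a continuous image of a compact set, and deduce closedness of the cone and norm-completeness from compactness of the unit ball in the coarser Hausdorff topology. The only difference is that you spell out directly (via the cluster-point argument and the $\sigma$-closedness of $\Stat_+(A)$) the steps the paper delegates to citations of \cite[Propositions 2.2.6 and 2.4.13]{furberthesis}.
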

\begin{proof}
It is easy to see from the definition that $(\StatS(A), \Stat_+(A))$ is a directed ordered vector space, $\tau$ is linear and positive with base $\Stat(A)$, and that if $\Stat(A) = \emptyset$ then $\StatS(A) = \{0\}$. If $\Stat(A) = \emptyset$, there is nothing further to show, so we assume that $\Stat(A)$ is not empty for the rest of the proof. To prove that $\StatS(A)$ has radially bounded unit ball and closed positive cone, we will use the locally convex topology mentioned in the second part of the lemma. We topologize $\R^A$ with the product topology, which is locally convex \cite[II.5.2, p. 52]{schaefer}, and therefore $\StatS(A)$, equipped with the subspace topology, is locally convex. As the projection mappings from a product are continuous, each element of $A$ defines a continuous function, which is also linear. We also have that $\StatS(f)$ is continuous for this topology. The subspace $[0,1]^A \subseteq \R^A$ is compact by Tychonoff's theorem, so to show that $\Stat(A)$ is compact, we only need to show that it is closed in $[0,1]^A$. Observe that the graph of $+$ as a subset of $[0,1]^3$ is closed, because $+$ is a continuous function. Therefore, for fixed $a, b \in A$ such that $a \ovee b$ exists in $A$, the set
\[
C_{a,b} = \{ \phi : A \rightarrow [0,1] \mid \phi(a \ovee b) = \phi(a) + \phi(b) \}
\]
is a closed subset of $[0,1]^A$. Additionally, the set $C_1 = \{ \phi \in [0,1]^A \mid \phi(1) = 1 \}$ is also closed. So $\Stat(A) = C_1 \cap \bigcap\limits_{a \perp b \in A} C_{a,b}$ is closed in $[0,1]^A$. 

As $\Stat(A)$ is compact, its absolutely convex hull, being the image of $[0,1] \times \Stat(A) \times \Stat(A)$ under a continuous map, is also compact. This proves that the unit ball of $\StatS(A)$ is radially compact, and therefore radially bounded with a closed positive cone \cite[Proposition 2.2.6]{furberthesis}. As $\Stat(A)$ is compact, it admits only one uniformity, in which it is complete. This (nontrivially) implies that $\StatS(A)$ is complete in its norm topology, and so is a Banach base-norm space \cite[Proposition 2.4.13]{furberthesis}. 
\end{proof}

The existence of the compact topology above was also used in the proof that every compact convex subset of a locally convex space arises as the state space of an orthomodular lattice \cite{shultz74} and for effect algebras in \cite[Proposition 3.4.2]{roumenthesis}.

The corresponding fact for the effects on a convex set is known, originally having been proved by Ozawa \cite[\S 3]{ozawa80} in the more general case of a convex prestructure (a set $X$ equipped with a map $[0,1] \times X \times X$ not required to satisfy any further axioms). For an abstract convex set $X \in \EM(\D)$, we define $\EffS(X)$ to be the bounded affine maps from $X$ to $\R$, made into a vector space with pointwise operations. The positive cone is defined to be affine maps taking values in $\Rgeq$, and the order unit to be the element $1$ that is constantly equal to $1$ on all elements of $X$. On maps $f : X \rightarrow Y$, $\EffS(f)(b) = b \circ f$, for $a \in \EffS(Y)$. 

\begin{lemma}
\label{SignedEffectDefLemma}
For each abstract convex set $X \in \EM(\D)$, $\EffS(X)$ is a Banach order-unit space such that $[0,1]_{\EffS(X)} = \Eff(X)$. With the above definition on morphisms, $\EffS$ is a functor $\EM(\D) \rightarrow \BOUS\op$. The space $\EffS(X)$ admits a locally convex topology in which each element of $X$ defines a continuous linear functional and $\Eff(X)$ is compact, such that $\EffS(f)$ is continuous for each affine map $f : X \rightarrow Y$. 
\end{lemma}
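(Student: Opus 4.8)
The plan is to dualize the proof of Lemma~\ref{StatSDefLemma}, now on the effects side. First I would verify that $(\EffS(X),\EffS(X)_+,1)$ is an order-unit space: the cone of bounded affine maps into $\Rgeq$ is generating and satisfies $\EffS(X)_+ \cap -\EffS(X)_+ = \{0\}$; the constant map $1$ is a strong order unit precisely because the elements of $\EffS(X)$ are bounded, so that $|b(x)| \le \alpha$ for all $x$ gives $-\alpha\cdot 1 \le b \le \alpha\cdot 1$; and $1$ is archimedean since $b \le \tfrac1n 1$ for all $n \in \N$ forces $b(x)\le 0$ for all $x$, hence $b \le 0$. (The degenerate case $X=\emptyset$, where $\EffS(X)=\{0\}$ with unit $0$, is trivial.) Next I would observe that the order-unit ball $[-1,1]$ of $\EffS(X)$ is exactly $\{b \mid \sup_x |b(x)| \le 1\}$, so the order-unit norm is the supremum norm and $\EffS(X)$ is a linear subspace of the Banach space $\ell^\infty(X)$. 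To conclude it is a Banach order-unit space I would show this subspace is closed: if $b_n \to b$ uniformly with each $b_n$ affine, then for every $\phi \in \D(X)$ we get $b(\alpha_X(\phi)) = \lim_n b_n(\alpha_X(\phi)) = \lim_n \sum_x \phi(x) b_n(x) = \sum_x \phi(x) b(x)$ (the sum being finite), so $b$ is affine. (One could instead deduce completeness from the compactness of $\Eff(X)$ exactly as in Lemma~\ref{StatSDefLemma}, but the $\ell^\infty$ argument is more direct.)

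For $[0,1]_{\EffS(X)} = \Eff(X)$ I would note that the unit interval of $\EffS(X)$ is the set of affine maps $X \to [0,1]$ — such maps being automatically bounded — which is by definition $\Eff(X) = \EM(\D)(X,[0,1])$, and that the effect-algebra structures agree because on both sides $0$, $1$, $b^\bot = 1 - b$ and the partial sum (defined where $b + b' \le 1$) are computed pointwise from $[0,1]$. For functoriality I would check that precomposition with an affine $f : X \to Y$ sends bounded affine maps on $Y$ to bounded affine maps on $X$ and is linear, positive and unit-preserving, and that $\EffS(g\circ f) = \EffS(f)\circ \EffS(g)$, so that $\EffS$ is a functor $\EM(\D) \to \BOUS\op$.

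For the topological claim I would put the product topology on $\R^X$ (locally convex by \cite[II.5.2]{schaefer}) and give $\EffS(X)$ the subspace topology. Each $x \in X$ yields the evaluation $b \mapsto b(x)$, which is a restricted coordinate projection, hence a continuous linear functional, and $\EffS(f)$ is continuous because it is the restriction of the precomposition map $\R^Y \to \R^X$ induced by $f$, which is coordinatewise a projection. The one step that takes actual work — and the expected main obstacle — is the compactness of $\Eff(X)$, handled in parallel with the $C_{a,b}$, $C_1$ argument of Lemma~\ref{StatSDefLemma}: $\Eff(X)$ sits inside the compact space $[0,1]^X$ (Tychonoff), and equals the intersection over $\phi \in \D(X)$ of the sets $\{b \in [0,1]^X \mid b(\alpha_X(\phi)) = \sum_x \phi(x) b(x)\}$, each of which is closed as the equalizer of two continuous maps $[0,1]^X \to [0,1]$ into a Hausdorff space; hence $\Eff(X)$ is closed in $[0,1]^X$ and compact. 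The care needed is just in confirming that these affineness constraints are genuinely closed and genuinely cut out exactly the $\EM(\D)$-morphisms into $[0,1]$; the remainder is a routine transcription of the base-norm side.
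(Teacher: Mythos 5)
Your proof is correct. Bear in mind that the paper deliberately omits a proof of this lemma, deferring to Ozawa and to Prop.~2.4.15 and Thm.~2.4.16 of the author's thesis, so the only in-paper template is the proof of Lemma~\ref{StatSDefLemma}; your topological part transcribes that template faithfully (product topology on $\R^X$, evaluations as coordinate projections, continuity of precomposition, and compactness of $\Eff(X)$ as a closed subset of $[0,1]^X$ cut out by the equations $b(\alpha_X(\phi))=\sum_x\phi(x)b(x)$, which do characterize the $\EM(\D)$-morphisms into $[0,1]$ since $\alpha_{[0,1]}\circ\D(b)$ evaluates to exactly that finite sum). Where you genuinely diverge is the completeness step: the paper's route on the base-norm side, and the cited thesis results on the effect side, obtain norm-completeness from compactness of the base or unit interval via the ``nontrivial'' implication the paper attributes to Prop.~2.4.13 of the thesis, whereas you identify the order-unit norm with the supremum norm and exhibit $\EffS(X)$ as a closed subspace of $\ell^\infty(X)$, closedness holding because each affineness constraint involves only finitely many coordinates and so passes to pointwise limits. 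This is more elementary and self-contained, at the cost of being special to the order-unit side (there is no analogous sup-norm description of $\StatS(A)$). The remaining verifications --- that $1$ is a strong order unit precisely because elements of $\EffS(X)$ are bounded, that the paper's strong form of archimedeanness ($b\le\tfrac1n 1$ for all $n$ implies $b\le 0$, not merely $b=0$ for positive $b$) holds pointwise, that $[0,1]_{\EffS(X)}=\Eff(X)$ as effect algebras including the domain of the partial sum, and the degenerate case $X=\emptyset$ --- are all handled correctly; I see no gaps.
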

The proof is omitted, so see either \cite[\S 3]{ozawa80}, or \cite[Prop. 2.4.15, Thm. 2.4.16]{furberthesis}, where $\EffS$ is called $\BAff$. 

We have shown how to factorize the functor $\Stat$ into $B \circ \StatS$ and the functor $\Eff$ into $[0,1]_\blank \circ \EffS$. This is enough to show that for $\eta_X$ to be an isomorphism, $X$ must be isomorphic to the base of a Banach base-norm space, and for $\epsilon_A$ to be an isomorphism, $A$ must be isomorphic to the unit interval of an order-unit space. To get a precise characterization, we must go a bit further.

As base-norm spaces $(E,E_+,\tau)$ are normed spaces, we can take their dual spaces $E^*$, \emph{i.e.} the Banach space of continuous linear maps $E \rightarrow \R$, equipped with the dual norm. Each $\phi : E \rightarrow \R$ can be restricted to the base to define $\rho_E(\phi) \in \EffS(B(E))$. Similarly, if $(A,A_+,u)$ is an order-unit space, we can restrict each continuous $a : A \rightarrow \R$ to $[0,1]_A$, to obtain $\xi_A(a) \in \StatS([0,1]_A)$. 

The dual space of a base-norm space is always a Banach order-unit space, for the following reason.

\begin{lemma}
\label{RestrictionLemma}
The restriction map $\rho$ is a natural isomorphism $\blank^* \Rightarrow \EffS \circ B$, and the restriction map $\xi$ is a natural isomorphism $\blank^* \Rightarrow \StatS \circ [0,1]_\blank$. 
\end{lemma}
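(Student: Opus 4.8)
The plan is to treat the two claims in parallel, since they are formally dual to one another. For the first statement I would fix a base-norm space $(E, E_+, \tau)$ and show that $\rho_E : E^* \rightarrow \EffS(B(E))$ is a well-defined isomorphism of order-unit spaces. Well-definedness amounts to two observations: every continuous linear $\phi : E \rightarrow \R$ restricts to an affine map on $B(E)$, and this restriction is bounded (because $B(E) \subseteq U(E)$ and $\phi$ is norm-bounded on the unit ball). That $\rho_E$ is linear, positive, and unit-preserving (sending the trace $\tau$ to the constant function $1$, using that $\tau$ takes value $1$ on all of $B(E)$) is immediate from the pointwise definitions of the order-unit structures on both sides.

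Next I would show $\rho_E$ is injective: if $\phi$ vanishes on $B(E)$, then since $E$ is the linear span of $B(E)$ (using that $E \neq \{0\}$ forces $B(E) \neq \emptyset$, as in the proof of Proposition \ref{ConvFullnessProp}), $\phi = 0$ by linearity. For surjectivity, given a bounded affine $b : B(E) \rightarrow \R$, I would extend it to $E$ by the same device used in Proposition \ref{ConvFullnessProp}: write $x = \alpha x_+ - \beta x_-$ with $x_\pm \in B(E)$, $\alpha, \beta \in \Rgeq$, and set $\tilde{b}(x) = \alpha b(x_+) - \beta b(x_-)$; well-definedness and linearity go exactly as before. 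The one genuinely new point over Proposition \ref{ConvFullnessProp} is continuity of $\tilde{b}$: here I would use that $b$ is \emph{bounded} on $B(E)$, say $|b| \leq M$ on $B(E)$, and that $U(E) = \co(B(E) \cup -B(E))$, to get $|\tilde{b}| \leq M$ on $U(E)$, hence $\|\tilde{b}\|_{E^*} \leq M < \infty$. Then $\tilde{b} \in E^*$ and $\rho_E(\tilde{b}) = b$. The inverse $\rho_E^{-1}$ is then also positive (since a functional positive on $B(E)$ is positive on $E_+ = \Rgeq \cdot B(E)$), so $\rho_E$ is an order-unit isomorphism. Naturality in $E$ is a routine diagram chase: for a trace-preserving positive $f : E \rightarrow F$, both $\rho_F \circ f^*$ and $\EffS(B(f)) \circ \rho_E$ send $\psi \in F^*$ to $(\psi \circ f)|_{B(E)} = (\psi|_{B(F)}) \circ B(f)$.

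The second statement is handled symmetrically: given an order-unit space $(A, A_+, u)$, restriction $\xi_A : A^* \rightarrow \StatS([0,1]_A)$ is well-defined because a norm-continuous functional on $A$ is bounded on $[0,1]_A \subseteq [-u,u]$, hence its restriction lies in the linear span of the states; injectivity uses that $A$ is the linear span of $[0,1]_A$ since $u$ is a strong order unit; surjectivity extends a functional on $[0,1]_A$ to $A$ by $a = a_+ - a_- \mapsto \cdot$ as in Proposition \ref{OUSFullnessProp}, with continuity again coming from boundedness on $[-u,u] = \co([0,1]_A \cup -[0,1]_A)$; and naturality is the same chase. The main obstacle — and the reason this lemma is not entirely trivial — is the surjectivity/continuity step: one must check that a bounded affine (resp. bounded linear-on-a-span) function on the base (resp. unit interval) extends to a genuinely norm-continuous functional on the whole space, which hinges on the identification of the unit ball with the circled convex hull of the base (resp. unit interval) together with radial boundedness of that hull. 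Everything else is bookkeeping with the pointwise-defined order-unit and base-norm structures.
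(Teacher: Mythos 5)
Your treatment of $\rho$ is essentially the standard argument (which the paper does not reprove but simply cites), and it is sound: an affine function bounded by $M$ on $B(E)$ is bounded by $M$ on the absolutely convex hull $\co(B(E)\cup -B(E))$, which \emph{is} the unit ball $U(E)$ by definition, so the linear extension built as in Proposition \ref{ConvFullnessProp} is norm-continuous; injectivity, positivity in both directions, and naturality are bookkeeping as you say.

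The second half, however, is not symmetric to the first, and running it ``symmetrically'' conceals the one genuinely nontrivial point. First, the identity $[-u,u]=\co([0,1]_A\cup -[0,1]_A)$ that you invoke is false in general: in $C(X)$ with $X$ a two-point space, a function taking the values $1$ and $-1$ lies in $[-u,u]$, but any $\lambda a-(1-\lambda)b$ with $a,b\in[0,u]$ that attains the value $1$ forces $\lambda=1$ and is therefore positive. Second, and more seriously, well-definedness of $\xi_A$ is not a boundedness statement. The codomain $\StatS([0,1]_A)$ is \emph{by definition} the linear span of (nonnegative multiples of) effect-algebra states, so you must show that the restriction of an arbitrary $\phi\in A^*$ is a linear combination of states --- equivalently, that $\phi$ decomposes as a difference of positive functionals. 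Boundedness of $\phi$ on $[0,1]_A$ gives a bounded additive function on the unit interval, not such a decomposition. This is the Jordan-decomposition theorem for duals of order-unit spaces (the dual cone is generating, and $A^*$ is a base-norm space with base the state space), which the paper imports from Ellis and from Alfsen--Shultz; it is a genuine theorem, usually proved by a Riesz-decomposition or Hahn--Banach argument, and your proposal contains no substitute for it. The surjectivity direction, by contrast, is easily repaired without the false convex-hull identity: each effect-algebra state on $[0,1]_A$ extends uniquely to a positive unital linear functional by Proposition \ref{OUSFullnessProp}, and a positive linear functional on an order-unit space is automatically norm-bounded by its value at $u$, so linear combinations of such extensions land in $A^*$.
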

\begin{proof}
For $\rho$, this is a standard fact, see \cite[Proposition 1.11]{alfsen01}, \cite[Theorem 1]{ozawa80} or \cite[Prop. 2.4.17 and Thm. 2.4.18]{furberthesis}. 

If $(A,A_+,u)$ is an order-unit space, we must first show that each $\phi \in A^*$ defines an element of $\StatS([0,1]_A)$ by restriction. As $A^*$ is a base-norm space (\cite{ellis64} or \cite[Theorem 1.19]{alfsen01}), every element of $A^*$ is in the span of the state space, so when restricted to $[0,1]_A$ each element of $A^*$ is in $\StatS([0,1]_A)$. In the other direction, every element of $\Stat([0,1]_A)$, \emph{i.e.} $\EA([0,1]_A,[0,1])$ is the restriction of a unique map of order-unit spaces $A \rightarrow \R$ by Proposition \ref{OUSFullnessProp}. As $\StatS([0,1]_A)$ is the linear span of $\Stat([0,1]_A)$, this shows that $\xi_A$ is a bijection. The fact that it is an isomorphism of base-norm spaces then follows directly from the way the base-norm space structures are defined on $A^*$ and $\StatS([0,1]_A)$, and the naturality from the fact that both functors are just precomposition for maps. 
\end{proof}

Recall that a normed space $E$ is called \emph{reflexive} if the natural embedding into its double dual $E^{**}$ is an isomorphism. Reflexive spaces, being dual spaces, are necessarily Banach spaces. We say that a base-norm space or an order-unit space is \emph{reflexive} if its underlying normed space is reflexive. We use $\RBNS$ for the full subcategory of $\BNS$ on reflexive base-norm spaces, and $\ROUS$ for the full subcategory of $\OUS$ on reflexive order-unit spaces. For an order-unit space $(A,A_+,u)$ we define $\epsilon_A' : A \rightarrow A^{**}$ as $\epsilon_A'(a)(\phi) = \phi(a)$, where $a \in A$ and $\phi \in A^*$, and for a base-norm space $(E,E_+,\tau)$ we define $\eta_E' : E \rightarrow E^{**}$ as $\eta_E'(x)(a) = a(x)$ where $x \in E$ and $a \in E^*$. 

\begin{theorem}
\label{ReflexiveCharacterizationTheorem}
For all base-norm spaces $E$, $B(\xi_{E^*}) \circ B(\eta'_E) = \Stat([0,1]_{\rho_E}) \circ \eta_{B(E)}$ and for all order-unit spaces $A$, $[0,1]_{\rho_{A^*}} \circ [0,1]_{\epsilon'_A} = \Eff(B(\xi_A)) \circ \epsilon_{[0,1]_A}$. Therefore $\eta_{B(E)}$ is an isomorphism iff $\eta'_E$ is, and $\epsilon_{[0,1]_A}$ is an isomorphism iff $\epsilon_A'$ is. As $\eta'_E$ is an isomorphism iff $E$ is reflexive, and $\epsilon'_A$ is an isomorphism iff $A$ is reflexive, $\RBNS$ is equivalent, under $B$, to the subcategory of $\EM(\D)$ where $\eta$ is an isomorphism, and $\ROUS$ is equivalent, under $[0,1]_\blank$, to the subcategory of $\EA$ where $\epsilon$ is an isomorphism. 
\end{theorem}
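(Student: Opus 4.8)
The plan is to reduce the whole statement to bookkeeping with three ingredients already available: the factorizations $\Stat = B \circ \StatS$ and $\Eff = [0,1]_\blank \circ \EffS$ (Lemmas~\ref{StatSDefLemma} and~\ref{SignedEffectDefLemma}), the natural isomorphisms $\rho\colon \blank^* \Rightarrow \EffS\circ B$ and $\xi\colon \blank^* \Rightarrow \StatS\circ[0,1]_\blank$ (Lemma~\ref{RestrictionLemma}), and the fact that $B$ and $[0,1]_\blank$ are full and faithful (Propositions~\ref{ConvFullnessProp} and~\ref{OUSFullnessProp}), hence reflect isomorphisms.

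First I would verify the two displayed commutativities by an element chase. For the first, fix $x \in B(E)$ and an effect $a \in [0,1]_{E^*}$: on the left, $B(\eta'_E)$ sends $x$ to the functional ``evaluate at $x$'' on $E^*$ and $B(\xi_{E^*})$ restricts it to $[0,1]_{E^*}$, so its value at $a$ is $a(x)$; on the right, $\eta_{B(E)}(x)$ is ``evaluate the effect at $x$'' and precomposition with $[0,1]_{\rho_E}$ re-expresses $a \in [0,1]_{E^*}$ as the restriction $a|_{B(E)}$, giving again $a(x)$. The second identity is the dual computation, both sides sending $a \in [0,1]_A$ to the effect $\psi \mapsto \psi(a)$ on $B(A^*)$. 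This is routine; the only care needed is tracking which arrows are restrictions of functionals and which are evaluations, and checking that $\eta'_E$ and $\epsilon'_A$ really land in the correct base resp.\ unit interval (e.g.\ that $\tau_{E^{**}}(\eta'_E(x)) = \tau_E(x) = 1$).

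Next I would pass from the identities to the ``iff'' statements. By Lemma~\ref{RestrictionLemma}, $\rho_E$ and $\xi_{E^*}$ are isomorphisms, so, functors preserving isomorphisms, $\Stat([0,1]_{\rho_E})$ and $B(\xi_{E^*})$ are isomorphisms; the first identity therefore presents $B(\eta'_E)$ as an isomorphism composed with $\eta_{B(E)}$, whence $B(\eta'_E)$ is an isomorphism iff $\eta_{B(E)}$ is, and since $B$ reflects isomorphisms, $\eta'_E$ is an isomorphism in $\BNS$ iff $\eta_{B(E)}$ is one in $\EM(\D)$. Dually $\epsilon'_A$ is an isomorphism in $\OUS$ iff $\epsilon_{[0,1]_A}$ is one in $\EA$. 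Finally, ``$\eta'_E$ an isomorphism'' is the same as ``$E$ reflexive'': $\eta'_E$ is always a positive, trace-preserving, isometric linear injection into $E^{**}$, so it is a $\BNS$-isomorphism exactly when it is surjective, i.e.\ exactly when $E$ is reflexive; the one point needing a remark, that the inverse is then positive, follows because $E_+$ is norm-closed and convex, hence weakly closed, hence equal to its double dual cone $\{x \in E : \phi(x) \geq 0 \text{ for all positive } \phi \in E^*\}$ by the bipolar theorem. Dually, $\epsilon'_A$ is an $\OUS$-isomorphism iff $A$ is reflexive, using that an archimedean order unit makes positivity detectable by the states.

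It remains to assemble the equivalences. Writing $\EM(\D)'$ for the full subcategory of $\EM(\D)$ on objects $X$ with $\eta_X$ an isomorphism, the above shows $B$ sends reflexive base-norm spaces exactly into $\EM(\D)'$, so it restricts to a functor $\RBNS \to \EM(\D)'$ which is full and faithful because $B$ is. For essential surjectivity, given $X$ with $\eta_X$ an isomorphism, set $F = \StatS(\Eff(X))$, a Banach base-norm space with $B(F) = \Stat(\Eff(X))$; then $\eta_X$ is an isomorphism $X \to B(F)$, and since the property ``$\eta$ is an isomorphism at this object'' is invariant under isomorphism of the object (naturality of $\eta$), $\eta_{B(F)}$ is an isomorphism, so $F$ is reflexive by the previous paragraph and $X \cong B(F)$ with $F \in \RBNS$; hence $B\colon \RBNS \to \EM(\D)'$ is an equivalence. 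The order-unit statement is the same argument with $[0,1]_\blank$, $\EffS(\Stat(A))$, and Proposition~\ref{OUSFullnessProp} in place of their counterparts. The genuinely non-mechanical points are the upgrade from normed-space reflexivity to a $\BNS$/$\OUS$-isomorphism, and the essential-surjectivity step, which feeds the freshly built $F$ back through the first identity and so relies on $\EM(\D)'$ (resp.\ its analogue in $\EA$) being closed under isomorphism.
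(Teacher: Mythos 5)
Your proposal is correct and follows essentially the same route as the paper: the same element chases for the two identities, the same use of $\rho$ and $\xi$ being isomorphisms, and the same appeal to fullness and faithfulness of $B$ and $[0,1]_\blank$ to reflect isomorphisms. You additionally spell out two points the paper leaves implicit — that a surjective double-dual embedding is an isomorphism in $\BNS$/$\OUS$ (via closedness of the cone) and the essential-surjectivity step through $\StatS(\Eff(X))$ — which are welcome elaborations rather than a different argument.
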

\begin{proof}
We first show that $B(\xi_{E^*}) \circ B(\eta_E') = \Stat([0,1]_{\rho_E}) \circ \eta_{B(E)}$. We can see that the types are correct using Lemmas \ref{StatSDefLemma} and \ref{SignedEffectDefLemma}. If we let $x \in B(E)$ and $a \in [0,1]_{E^*}$, then
\begin{align*}
(B(\xi_{E^*}) \circ B(\eta_E'))(x)(a) &= \xi_{E^*}(\eta_E'(x))(a) \\
 &= \eta_E'(x)(a) \\
 &= a(x) \\
 &= \eta_{B(E)}(x)(a) \\
 &= \eta_{B(E)}(x)([0,1]_{\rho_E}(a)) \\
 &= \Stat([0,1]_{\rho_E})(\eta_{B(E)}(x))(a).
\end{align*}
As $\xi_{E^*}$ and $\rho_E$ are isomorphisms, we have that $B(\eta_E')$ is an isomorphism iff $\eta_{B(E)}$ is. By Proposition \ref{ConvFullnessProp}, $\eta_E'$ is an isomorphism iff $B(\eta_E')$ is. Therefore $B$, when restricted to $\RBNS$, is an equivalence of $\RBNS$ with the full subcategory of objects in $\EM(\D)$ for which $\eta$ is an isomorphism. 

Now we show that $[0,1]_{\rho_{A^*}} \circ [0,1]_{\epsilon_A'} = \Eff(B(\xi_A)) \circ \epsilon_{[0,1]_A}$. We can see that the types are correct by Lemmas \ref{StatSDefLemma} and \ref{SignedEffectDefLemma}. If we let $a \in [0,1]_A$ and $\phi \in B(A^*)$, then
\begin{align*}
([0,1]_{\rho_{A^*}} \circ [0,1]_{\epsilon_A'})(a)(\phi) &= \rho_{A^*}(\epsilon_A'(a))(\phi) \\
 &= \epsilon_A'(a)(\phi) \\
 &= \phi(a) \\
 &= \epsilon_{[0,1]_A}(a)(\phi) \\
 &= \epsilon_{[0,1]_A}(a)(B(\xi_A)(\phi)) \\
 &= \Eff(B(\xi_A))(\epsilon_{[0,1]_A}(a))(\phi).
\end{align*}
We now have that $[0,1]_{\epsilon_A'}$ is an isomorphism iff $\epsilon_{[0,1]_A}$ is an isomorphism, and the rest of the theorem is proved by using Proposition \ref{OUSFullnessProp}. 
\end{proof}

Every finite-dimensional normed space is reflexive, so the self-adjoint part of every finite-dimensional C$^*$-algebra is a reflexive order-unit space, and its dual space is a reflexive base-norm space. For ease of reference, we say that an abstract convex set $X$ is reflexive iff $\eta_X$ is an isomorphism and an effect algebra is reflexive iff $\epsilon_A$ is an isomorphism. Then a convex subset $X$ of $\R^n$ is reflexive iff it is closed and bounded, or equivalently, because $\R^n$ is finite-dimensional, iff $X$ is compact. If it happens to be the case that $\Eff(\Stat(A))$ is reflexive, for $A$ an effect algebra, then $\Eff(\Stat(A))$ can be considered to be the \emph{free} reflexive effect algebra on $A$, where $\epsilon_A$ is the universal arrow. One example where this occurs is if $\Hil$ is a finite-dimensional Hilbert space of dimension $\geq 3$, then $[0,1]_{B(\Hil)}$ is the free reflexive effect algebra on $\Proj(\Hil)$, the orthomodular lattice of projections on $\Hil$, by Gleason's theorem. 

There are also infinite-dimensional reflexive base-norm and order-unit spaces. One example is the base-norm space with base the unit ball of an infinite dimensional Hilbert space $\Hil$. This base-norm space is the dual space of an infinite dimensional spin factor \cite[\S 6]{olsen}, an infinite dimensional version of a type of simple Jordan algebra occurring in Jordan, von Neumann and Wigner's classification \cite{jordan}. 

Unfortunately, C$^*$-algebras are reflexive only if they are finite-dimensional. This also means that there is no free reflexive effect algebra on the effects of an infinite-dimensional Hilbert space -- $\epsilon$ is never an isomorphism no matter how many times we take the double dual. In the next section we describe a way of fixing this by breaking the symmetry between states and effects using topologies. 

\section{Extension to Smith Spaces}
\label{SmithSpaceSection}
The following section follows along the lines of the previous section and Sections  3.3 and 3.4 of \cite{furberthesis}, which the reader is invited to consult for more details, if needed. Smith spaces are a way of making every Banach space ``reflexive'' -- instead of equipping $E^*$ with the dual norm, we use the ``bounded weak-* topology'', a modification of the weak-* topology that we shall describe in a moment. Then the double-dual embedding from $E$ into the continuous dual of $E^*$ is an isomorphism if $E$ is a Banach space, and a dense embedding for an incomplete normed space. Smith spaces can be defined abstractly as normed spaces equipped with a locally convex topology $\Tee$ in which the unit ball is compact and such that $\Tee$ is the finest topology agreeing with itself on all norm-bounded sets. For each locally convex topology $\Tee$ that makes the unit ball of a normed space $E$ compact, there is a ``Smithification'' $\Tee_b$, which is the finest topology agreeing with $\Tee$ on the norm-bounded sets of $E$, and is a Smith space topology \cite[Proposition 3.2.9]{furberthesis}. If $E$ is the dual of a Banach space, then applying this process to the weak-* topology on $E$ gives the bounded weak-* topology. 

If $E$ is a normed space, then we write $E^\sigma$ for its dual space with the bounded weak-* topology. Unlike the weak-* topology, if $\overline{E}$ is the completion of $E$, the restriction mapping $\overline{E}^\sigma \rightarrow E^\sigma$ is a linear homeomorphism. If $F$ is a Smith space, we write $F^\beta$ for the dual space of $F$ equipped with the dual norm. This is always a Banach space and has the ``strong dual topology'' \cite[Propositions 3.2.13 and 3.2.14]{furberthesis}. We can consider $\blank^\beta : \Smith \rightarrow \Norm\op$ and $\blank^\sigma : \Norm\op \rightarrow \Smith$ as functors, and then $\blank^\beta$ is a left adjoint to $\blank^\sigma$ \cite[Theorem 3.2.22]{furberthesis}, and this is an adjoint equivalence when $\blank^\sigma$ is restricted to Banach spaces \cite[Corollary 3.2.23]{furberthesis}. 

A Smith-base norm space is a quadruple $(E,\Tee,E_+,\tau)$ where $(E,E_+,\tau)$ is a base-norm space, $\Tee$ is a locally convex topology on $E$ such that $\tau$ is continuous, the base $B(E)$ is compact and such that $(E,\Tee)$ is a Smith space with respect to the norm defined by $U(E)$ \cite[\S 3.3]{furberthesis}. A Smith order-unit space is a quadruple $(A,\Tee,A_+,u)$ such that $(A,A_+,u)$ is an order-unit space, $\Tee$ a locally convex topology such that $[0,u]$ is compact, and $(A,\Tee)$ a Smith space with respect to the norm defined by $[-u,u]$ \cite[\S 3.4]{furberthesis}. The corresponding categories are $\SBNS$ and $\SOUS$, where the maps are required to be continuous with respect to the Smith space topologies. 

The following theorem extends Ozawa's \cite[Theorems 1 and 3]{ozawa80} to a categorical duality. Ozawa appears to have been the first to see the usefulness of the \emph{bounded} weak-* topology on $\EffS(X)$ instead of the weak-* topology.

\begin{theorem}
If we use the topology from Lemma \ref{SignedEffectDefLemma} to define a bounded weak-* topology on $\EffS(X)$, then $\EffS(X)$ is a Smith order-unit space, and the dual space of $\EffS(X)$ is the ``free''\footnote{Note that $X$ need not embed injectively in $\EffS(X)^*$.} Banach base-norm space on $X$. We can define a category $\CEMod$ of compact effect modules characterizing the unit intervals of $\SOUS$, and restrict the functor $\Stat$ to $\CStat : \CEMod\op \rightarrow \EM(\D)$, taking only the continuous states. Then $(\Eff,\CStat,\eta,\epsilon)$ is an adjunction where $\epsilon$ is always an isomorphism and $\eta_X$ is an isomorphism iff $X$ is isomorphic to the base of a Banach base-norm space. 
\end{theorem}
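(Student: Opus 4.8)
The plan is to carry out the arguments of the previous two sections with Smith-space topologies in place of norms and with the Smith duality $\blank^\beta \dashv \blank^\sigma$ in place of ordinary duality, the analytic input being Sections~3.2--3.4 of \cite{furberthesis}. By Lemma~\ref{SignedEffectDefLemma}, $\EffS(X)$ is a Banach order-unit space with a locally convex topology $\Tee$ in which $[0,1]_{\EffS(X)} = \Eff(X)$ is compact; since $[-1,1]_{\EffS(X)}$ is the image of $[0,1]_{\EffS(X)}$ under the $\Tee$-continuous affine map $a \mapsto 2a-1$, the order-unit ball is $\Tee$-compact. Let $\Tee_b$ be the Smithification of $\Tee$ \cite[Proposition 3.2.9]{furberthesis}; it is a Smith-space topology for the order-unit norm, agrees with $\Tee$ on norm-bounded sets, and keeps $[-1,1]_{\EffS(X)}$ compact, so $(\EffS(X),\Tee_b,\EffS(X)_+,1)$ is a Smith order-unit space. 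Any positive unital linear map out of it that is $\Tee$-continuous is $\Tee_b$-continuous, because it is norm-nonincreasing, hence bounded-set-preserving, and $\Tee_b$ is recovered from its restrictions to bounded sets; with Lemma~\ref{SignedEffectDefLemma} this makes $X \mapsto (\EffS(X),\Tee_b)$ a functor $\EM(\D) \to \SOUS\op$, and hence $\Eff$ a functor $\EM(\D) \to \CEMod\op$ once $\CEMod$ is defined.

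Next, let $L(X)$ be the norm-closed linear span of $\eta_X(X)$ inside $\StatS(\Eff(X)) \cong \EffS(X)^*$ (via $\xi_{\EffS(X)}$ of Lemma~\ref{RestrictionLemma}), with cone and trace inherited; it is a Banach base-norm space and $\eta_X$ factors through a map $X \to B(L(X))$. I claim $L(X)$ is the $\beta$-dual $\EffS(X)^\beta$ (the $\Tee_b$-continuous dual with its dual norm, a Banach space by \cite[Propositions 3.2.13--3.2.14]{furberthesis}): a $\Tee_b$-continuous functional is norm-bounded, and by the structure of the Smithification its restriction to the order-unit ball is a norm-limit of elements of $\mathrm{span}(\eta_X(X))$, see \cite[\S 3.3]{furberthesis}. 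To see $\eta_X : X \to B(L(X))$ is universal, take a Banach base-norm space $F$ and an affine $g : X \to B(F)$; then $\EffS(g) : \EffS(B(F)) \to \EffS(X)$ is positive and unital, hence a $\SOUS$-morphism by the previous paragraph, so $\blank^\beta$ produces a $\BBNS$-morphism $\widetilde g : \EffS(X)^\beta \to \EffS(B(F))^\beta$. Identifying $\EffS(X)^\beta = L(X)$ and $\EffS(B(F))^\beta \cong (F^\sigma)^\beta \cong F$ --- by the Smith form of Lemma~\ref{RestrictionLemma} and the reflexivity of Banach spaces in the Smith duality --- and unwinding evaluations gives $B(\widetilde g)\circ\eta_X = g$; uniqueness follows from Proposition~\ref{ConvFullnessProp} and the density of $\mathrm{span}(\eta_X(X))$ in $L(X)$. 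Thus $\EffS(X)^\beta = L(X)$ is the free Banach base-norm space on $X$ relative to $B : \BBNS \to \EM(\D)$.

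Now let $\CEMod$ have objects the effect modules isomorphic to $[0,1]_A$ for some $A \in \SOUS$, and morphisms the effect-module maps continuous for the (unique, compact) restrictions of the Smith topologies to unit intervals. By the argument of Proposition~\ref{OUSFullnessProp}, $[0,1]_\blank$ is faithful, and it is essentially surjective onto $\CEMod$ by construction; it is full because the unique positive unital linear extension of a continuous effect-module map is norm-nonincreasing and continuous on $[0,u]$, hence on $[-u,u] = 2[0,u]-u$, hence on every bounded set, hence continuous by the Smith property. So $[0,1]_\blank : \SOUS \to \CEMod$ is an equivalence, and under it the continuous states on $[0,1]_{A'}$ are $\SOUS(A',\R) = B((A')^\beta)$. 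Chaining this equivalence, the Smith equivalence $\blank^\beta : \SOUS \to \BBNS\op$ with pseudo-inverse $\blank^\sigma$ (\cite[\S\S 3.3--3.4]{furberthesis}, via \cite[Corollary 3.2.23]{furberthesis}), the identification $\EffS(X)^\beta = L(X)$, and the previous paragraph gives, for $A = [0,1]_{A'}$,
\[
\CEMod(A,\Eff(X)) \cong \SOUS(A',\EffS(X)) \cong \BBNS(L(X),(A')^\beta) \cong \EM(\D)(X,B((A')^\beta)) = \EM(\D)(X,\CStat(A)),
\]
naturally in $X$ and $A$. As every intermediate bijection is a restriction or an evaluation, the unit and counit of the resulting adjunction $\Eff \dashv \CStat$ are $\eta_X(x)(a)=a(x)$ and $\epsilon_A(a)(\phi)=\phi(a)$, with $\eta_X$ now landing in $\CStat(\Eff(X)) = B(\EffS(X)^\beta) = B(L(X))$.

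Finally, for $A=[0,1]_{A'}$ we have $\Eff(\CStat(A)) = [0,1]_{\EffS(B((A')^\beta))} \cong [0,1]_{((A')^\beta)^\sigma} \cong [0,1]_{A'} = A$, the last step applying $[0,1]_\blank$ to the unit $A' \to ((A')^\beta)^\sigma$ of the Smith duality, which is an isomorphism since $A'$ is a Smith space; this composite is $\epsilon_A$, so $\epsilon$ is always an isomorphism. If $\eta_X : X \to B(L(X))$ is an isomorphism then $X$ is the base of the Banach base-norm space $L(X)$; conversely if $X \cong B(F)$ with $F$ a Banach base-norm space then $L(X) \cong \EffS(B(F))^\beta \cong (F^\sigma)^\beta \cong F$ and $\eta_X$ becomes the double-dual map $B(F) \to B((F^\sigma)^\beta)$, which is an isomorphism. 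So $\eta_X$ is an isomorphism iff $X$ is isomorphic to the base of a Banach base-norm space. The step I expect to need real care --- and on which ``the dual space of $\EffS(X)$'' being the \emph{free}, rather than the full double, Banach base-norm space hinges --- is the identification of $\EffS(X)^\beta$ with the norm-closed span of $\eta_X(X)$; this, together with the two automatic-continuity arguments above (norm-bounded weak-$*$-continuous positive unital, resp.\ affine, maps are bounded-weak-$*$-continuous) and the equivalence $\BBNS\op\simeq\SOUS$, is precisely the analytic content imported from \S\S 3.2--3.4 of \cite{furberthesis}; granting it, the remainder is the bookkeeping with equivalences and hom-sets already used in Theorem~\ref{ReflexiveCharacterizationTheorem}.
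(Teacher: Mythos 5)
Your proposal is correct and rests on the same machinery as the paper's proof, but it unpacks into explicit arguments what the paper obtains by citation. The paper's proof is essentially a chain of references to the thesis: $\EffS(X)$ is a Smith order-unit space [Propositions 3.4.8--3.4.9], $[0,1]_\blank : \SOUS \rightarrow \CEMod$ is an equivalence [Theorem 3.4.3], and the adjunction $\EffS \dashv \CStat$ with $\epsilon$ invertible [Theorem 3.4.11] is composed with that equivalence; freeness of $\EffS(X)^\beta$ then falls out of the adjunction, and the characterization of when $\eta_X$ is an isomorphism is argued exactly as you do, via $\EffS(B(E)) \cong E^\sigma$ and $E^{\sigma\beta} \cong E$. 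You instead reconstruct the adjunction by hand: the Smithification argument with the affine map $a \mapsto 2a - 1$ for compactness of the order-unit ball, the explicit identification of $\EffS(X)^\beta$ with the norm-closed span $L(X)$ of $\eta_X(X)$ together with a direct verification of its universal property, the fullness argument for $[0,1]_\blank$ on $\SOUS$ via continuity on $[0,u]$, then $[-u,u]$, then all bounded sets, and finally the chain of natural hom-set bijections. You also correctly isolate the one genuinely analytic step --- that the $\beta$-dual of the Smithified $\EffS(X)$ is the norm-\emph{closure} of the span of the point evaluations rather than the full Banach dual, which is precisely why one gets the free Banach base-norm space on $X$ and not a double dual --- and this is indeed what [Theorem 3.4.11] of the thesis encapsulates. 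Two cosmetic differences: you define $\CEMod$ as the unit intervals of $\SOUS$ by fiat (making essential surjectivity trivial), whereas the paper gives the embeddability definition and defers the intrinsic characterization to [\S 4.4] of the thesis; and your parenthetical claim that the compact topology on a unit interval in $\CEMod$ is unique is itself one of those deferred results, though nothing in your argument depends on it. What the paper's route buys is brevity; what yours buys is a self-contained account in which the reader can see where each piece of the Smith-space duality is actually used.
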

\begin{proof}
In \cite[Propositions 3.4.8 and 3.4.9]{furberthesis} it is proved that $\EffS(X)$ is a Smith order-unit space (using the notation $\BAff$ for $\EffS$). We can define a compact effect module to be an effect module\footnote{Also known as \emph{convex effect algebra}, see \cite[\S 1.2.1]{furberthesis}.} that can be embedded in a locally convex topological vector space, and the category $\CEMod$ has these as objects and continuous effect module maps as morphisms\footnote{By the fullness part of Proposition \ref{OUSFullnessProp} it does not matter whether we use effect module or effect algebra maps here, as by definition objects of $\CEMod$ are embeddable in order-unit spaces.}. More intrinsic characterizations of $\CEMod$ are given in \cite[\S 4.4]{furberthesis}, and it is proved that the unit interval functor $[0,1]_\blank : \SOUS \rightarrow \CEMod$ is an equivalence in \cite[Theorem 3.4.3]{furberthesis}. Therefore the adjunction $\EffS \dashv \CStat$ \cite[Theorem 3.4.11]{furberthesis}, in which $\epsilon$ is an isomorphism, can be composed with the equivalence $[0,1]_\blank$ to obtain the adjunction $(\Eff, \CStat, \eta, \epsilon)$, where $\Eff \dashv \CStat$, and $\epsilon$ is an isomorphism. This is also what shows that the dual space of $\EffS(X)$ is the free Banach base-norm space on $X$. If $\eta_X$ is an isomorphism, then it constitutes an embedding of $X$ as the base of the Banach base-norm space $\EffS(X)^\beta$. In the other direction, if $(E,E_+,\tau)$ is a Banach base-norm space and $X = B(E)$, then $\EffS(X) \cong E^\sigma$ by restricting linear functionals to the base (Lemma \ref{RestrictionLemma}), and so $\EffS(X)^\beta \cong E^{\sigma\beta}$, to which $E$ is isomorphic by the unit of the adjoint equivalence between $\BBNS$ and $\SOUS$ \cite[Theorem 3.4.5]{furberthesis}. As $\eta_X$ is equal to the $B$ functor applied to the composite of these isomorphisms, it is an isomorphism. 
\end{proof}

It may help to note that $X$ can be embedded as the base of a Banach base-norm space iff it is embeddable as a sequentially complete and bounded subset of a locally convex topological vector space \cite[Proposition 2.4.13]{furberthesis}, for instance if it is a closed bounded subset of a Banach space.

\begin{theorem}
If we use the topology from Lemma \ref{StatSDefLemma} on $\StatS(A)$ to define a bounded weak-* topology, then $\StatS(A)$ is a Smith base-norm space, and the dual space of $\StatS(A)$ is the ``free'' Banach order-unit space on an effect algebra. We can define a category $\CCL$ of compact convex sets, equivalent to $\EM(\Rdn)$ where $\Rdn$ is the Radon measures monad, characterizing the bases of Smith base-norm spaces, and restrict the functor $\Eff$ to $\CEff : \CCL \rightarrow \EA\op$, taking only continuous effects. Then $(\CEff, \Stat, \eta, \epsilon)$ is an adjunction where $\eta$ is always an isomorphism and $\epsilon_A$ is an isomorphism iff $A$ is isomorphic to the unit interval of a Banach order-unit space.
\end{theorem}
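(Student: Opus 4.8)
The plan is to transpose the proof of the preceding theorem, interchanging states with effects and order-unit spaces with base-norm spaces, and to import the base-norm-side analogues of the results of Sections~3.3 and~3.4 of \cite{furberthesis}. First I would record that $\StatS(A)$, equipped with the Smithification $\Tee_b$ of the locally convex topology furnished by Lemma~\ref{StatSDefLemma}, is a Smith base-norm space: that topology already makes $\Stat(A)$ compact, hence makes the unit ball of $\StatS(A)$ --- which is the image of $[0,1]\times\Stat(A)\times\Stat(A)$ under a continuous map --- compact, and Smithifying preserves compactness of the unit ball, so the mirror of \cite[Propositions 3.4.8 and 3.4.9]{furberthesis} applies. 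Since $\StatS(A)$ is already a \emph{Banach} base-norm space by Lemma~\ref{StatSDefLemma}, its strong dual $\StatS(A)^\beta$ is a Banach order-unit space, just as in the adjoint equivalence between Banach spaces and Smith spaces.

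Next I would set up $\CCL$: its objects are the abstract convex sets admitting an embedding, as a convex subset, into a locally convex space so as to be compact, and its morphisms are the continuous affine maps. The equivalence $\CCL\simeq\EM(\Rdn)$, for $\Rdn$ the Radon probability measure monad on $\CHaus$, goes through the barycentre map $\Rdn(X)\to X$, which is a well-defined Eilenberg--Moore structure precisely because every Radon probability measure on a compact convex subset of a locally convex space has a unique barycentre; conversely an $\Rdn$-algebra equips a compact Hausdorff space with a convex structure, obtained by restricting the algebra map to finitely supported measures, satisfying the required laws. I would cite \cite[\S 4.4]{furberthesis} and the classical literature on compact convex sets and the Radon monad for this. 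Then the base functor $B:\SBNS\to\CCL$ is an equivalence --- the mirror of \cite[Theorem 3.4.3]{furberthesis} --- with a quasi-inverse sending a compact convex set $X$ to $\EffS(X)^\beta$ equipped with the bounded weak-* topology, whose base is $X$ by Banach--Alaoglu, and with Smith-continuous trace-preserving maps corresponding bijectively to continuous affine maps of bases.

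The adjunction then assembles. On the Smith side one has the mirror of \cite[Theorem 3.4.11]{furberthesis}: an adjunction between $\SBNS$ and $\EA\op$ in which the unit $\eta$ is an isomorphism. This is exactly where the \emph{bounded} weak-* topology, rather than the weak-* topology, is needed, and it is the state-space counterpart of Ozawa's argument \cite{ozawa80}. Composing with the equivalence $B$ replaces $\StatS$ by $\Stat=B\circ\StatS$ and yields the adjunction $(\CEff,\Stat,\eta,\epsilon)$ in which $\eta$ is always an isomorphism, exhibiting $\StatS(A)^\beta$ as the free Banach order-unit space on $A$. Finally, since $\CEff(\Stat(A))=[0,1]_{\StatS(A)^\beta}$, the arrow $\epsilon_A$ is an isomorphism iff $A$ is isomorphic to the unit interval of the Banach order-unit space $\StatS(A)^\beta$; for the converse, if $A\cong[0,1]_D$ with $D$ a Banach order-unit space, then restricting functionals to $[0,1]_D$ gives $\StatS(A)\cong D^\sigma$ (the Smith refinement of Lemma~\ref{RestrictionLemma}), so $\StatS(A)^\beta\cong D^{\sigma\beta}\cong D$ by the unit of the adjoint equivalence between $\BOUS$ and $\SBNS$ (mirror of \cite[Theorem 3.4.5]{furberthesis}), and as $\epsilon_A$ is $[0,1]_\blank$ applied to the composite of these isomorphisms, it too is an isomorphism.

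I expect the main obstacle to be the two equivalences $\CCL\simeq\EM(\Rdn)$ and $B:\SBNS\simeq\CCL$, and within them the identification with the Radon monad, whose substance is the Choquet-theoretic existence and uniqueness of barycentres together with the verification of the Eilenberg--Moore unit and multiplication laws for barycentring. Everything else is a faithful transposition of the argument for the previous theorem --- interchanging ``state'' with ``effect'' and ``order-unit'' with ``base-norm'' --- provided each cited result of \cite{furberthesis} is available on the base-norm side as well as on the order-unit side, and that the bounded weak-* topology (as opposed to the plain weak-* topology) is used in the key surjectivity step for $\eta$.
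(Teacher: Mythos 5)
Your proposal follows essentially the same route as the paper: Smithify the topology of Lemma \ref{StatSDefLemma} on $\StatS(A)$, establish the equivalences $B : \SBNS \rightarrow \CCL$ and $\CCL \simeq \EM(\Rdn)$, define $\CEff$ via continuous effects, and use the adjoint equivalence $\SBNS \simeq \BOUS\op$ to assemble the adjunction and to characterize when $\eta$ and $\epsilon_A$ are isomorphisms. The only quibbles are minor: the quasi-inverse of $B$ should send a compact convex set $X$ to (the base of) $\CEffS(X)^\sigma$ rather than $\EffS(X)^\beta$, since $\blank^\beta$ denotes the dual of a Smith space and the effects must be the continuous ones; and you omit the explicit check, which the paper carries out, that $\StatS(f)$ is continuous for the Smith space topologies so that $\StatS$ really lands in $\SBNS$.
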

\begin{proof}
By \cite[Proposition 3.3.2]{furberthesis}, the Smithification of $\StatS(A)$ equipped with the topology given in Lemma \ref{StatSDefLemma} (considering $\StatS(A)$ as a subspace of $\R^A$ with the product topology) is a Smith base-norm space. If $f : A \rightarrow B$ is an effect algebra homomorphism, we show $\StatS(f)$ is continuous as follows. Consider the subbasic $\R^A$ 0-neighbourhood $N_{a,\epsilon}$ defined by an element $a \in A$, $N_a = \{ \phi \in \R^A \mid |\phi(a)| \leq \epsilon \}$. Then, with the analogous $\R^A$ $0$-neighbourhood $M_{f(a)} = \{ \phi \in \R^B \mid |\phi(f(a))| \leq \epsilon \}$, we have $\phi \in M_{f(a)}$ implies $\StatS(f)(\phi) \in N_a$, so $\StatS(f)$ is continuous with respect to the product topologies. Therefore $\StatS(f)$ is continuous with the Smith space topology on $\StatS(B)$ and the product topology on $\StatS(A)$, so by \cite[Corollary 3.2.16]{furberthesis} it is continuous when both spaces have the Smith space topologies. Combined with what was proved in Lemma \ref{StatSDefLemma}, this shows that $\StatS : \EA\op \rightarrow \SBNS$ is a functor.

The category $\CCL$ is defined to have pairs $(E,X)$ as objects, where $E$ is a locally convex topological vector space and $X \subseteq E$ is a convex set that is compact with respect to the subspace topology. A morphism $f : (E,X) \rightarrow (F,Y)$ is simply an affine continuous map $X \rightarrow Y$, with no requirement of a definition on, or extension to, $E$. The functor $B : \SBNS \rightarrow \CCL$ is an equivalence \cite[Proposition 3.3.3]{furberthesis}, so every element of $\CCL$ can be put into a canonical form ($(E,B(E))$ where $E$ is a Smith base-norm space) such that affine continuous maps extend to $E$. The forgetful functor $U : \CCL \rightarrow \CHaus$, where $\CHaus$ is the category of compact Hausdorff spaces, is monadic and the monad is isomorphic to $\Rdn$, so $\CCL \simeq \EM(\Rdn)$ \cite[Theorems 4.2.8 and 4.2.9]{furberthesis}. We can compose $B \circ \StatS$ to obtain a functor $\Stat : \EA\op \rightarrow \CCL$. 

The functor $\CEffS : \CCL \rightarrow BOUS\op$ is defined on objects $(E,X)$ to be those elements of $\EffS$ that are continuous with respect to the topology on $X$. As composition with continuous functions preserves continuity, $\CEffS$ is a functor, the restriction of $\EffS$. Then $\CEff = [0,1]_\blank \circ \CEffS : \CCL \rightarrow \EA$. It is also useful to note that the dual space functor $\blank^\beta : \SBNS \rightarrow \BOUS\op$ is isomorphic to $\CEffS \circ B$, by restricting linear functionals to the base \cite[Theorem 3.3.6]{furberthesis}.

Then $\eta$ and $\epsilon$ can be defined as in the original adjunction, satisfying the same diagrams, so $\CEff \dashv \Stat$. As $B : \SBNS \rightarrow \CCL$ is an equivalence, $\CEff \dashv \StatS$, and as $\blank^\beta : \SBNS \rightarrow \BOUS\op$ is an equivalence \cite[Theorem 3.3.7]{furberthesis}, $\StatS(A)^\beta$ is the free order-unit space on the effect algebra $A$. The map $\eta_X : X \rightarrow \Stat(\CEff(X))$ is always an isomorphism because it isomorphic as an arrow to $B(\eta_E) : B(E) \rightarrow B(E^{\beta\sigma})$, where $E$ is a Smith base-norm space, and $\eta_E$ the unit for the adjoint equivalence between $\SBNS$ and $\BOUS\op$ \cite[Theorem 3.3.7]{furberthesis}. 

If $\epsilon_A : A \rightarrow \CEff(\Stat(A))$ is an isomorphism, then $A$ has been embedded\footnote{By Proposition \ref{OUSFullnessProp}, it does not matter if we consider this as an embedding with effect algebra morphisms or convex effect algebra morphisms.} as the unit interval of the Banach order-unit space $\CEffS(\Stat(A))$. For the other direction, if $A = [0,1]_B$, where $B$ is a Banach order-unit space, then $\epsilon_A = [0,1]_{\epsilon_B}$, where $\epsilon_B$ is the counit of the adjoint equivalence $\BOUS\op \simeq \SBNS$ \cite[Theorem 3.3.7]{furberthesis} and is therefore an isomorphism. 
\end{proof}

An example of the free Banach order-unit space on an effect algebra can be given in the case of a Boolean algebra $A$. If $X$ is the Stone space of $A$, then $C(X)$ is the self-adjoint part of a commutative C$^*$-algebra, and therefore a Banach order-unit space, and for every Banach order-unit space $B$ and effect algebra homomorphism $f : A \rightarrow [0,1]_B$, there is an extension to a positive unital map $\tilde{f} : C(X) \rightarrow B$. 

\subsubsection*{Acknowledgements}
Robert Furber has been financially supported by the Danish Council for Independent Research, Project 4181-00360.

The question of whether Proposition \ref{OUSFullnessProp} is true or not, and what happens if we apply \cite[Part 0, Proposition 4.2]{lambek} arose in discussions of certain results from my thesis with Phil Scott and Pieter Hofstra on a visit to Ottawa. I was not able to give a full answer at the time, but this paper is the result.

\bibliographystyle{eptcs}
\bibliography{eaduality}

\begin{appendices}
\section{Appendix}
\label{ArchimedeannessAppendix}
In this section, we show to what extent the archimedean property is used in Proposition \ref{OUSFullnessProp}. We first remark that, as an ordered normed space that is a dual space must have a closed positive cone, which implies archimedeanness \cite[Lemma A.5.3]{furberthesis}, the non-archimedean spaces we consider in the section can never arise as dual spaces. 

In order to consider the more general situation, we temporarily redefine order-unit space to omit the archimedean property, so for this section only, an order unit space is a triple $(A,A_+,u)$ such that $(A,A_+)$ is an ordered vector space and $u$ is a strong order unit. These form a category $\OUS$ where the morphisms are linear positive unital maps. In \cite[\S 3.1, Theorem 3]{JacobsM12b} it is shown that the functor $[0,1]_\blank : \OUS \rightarrow \EMod$ is an equivalence of categories, where $\EMod$ is the category of \emph{effect modules}, which are effect algebras equipped with the structure of a module over $[0,1]$, where the morphisms are required to preserve multiplication by elements of $[0,1]$. The objects of $\EMod$ were originally defined under the name \emph{convex effect algebras} \cite{PulmannovaG98}. We saw in Proposition \ref{OUSFullnessProp} that $[0,1]_\blank$ was full when restricted to archimedean order-unit spaces. We now describe various weakenings of archimedeanness. 

In general, the Minkowski functional $\|\blank\|_{[-u,u]}$ is only a semi-norm. We say that an order-unit space $(A,A_+,u)$ is \emph{almost archimedean} (as in \cite[1.3.7]{jameson}) if $\forall n \in \N. -\frac{1}{n}u \leq a \leq \frac{1}{n}u$ implies $a = 0$, or equivalently, if $\|\blank\|_{[-u,u]}$ is a norm. So every archimedean order-unit space is almost archimedean \cite[Lemma A.5.3]{furberthesis}. is  We define an \emph{infinitesimal} to be an element $a \in A_+$ such that $a \leq \frac{1}{n}u$ for all $n \in \N$. The property of having no non-zero infinitesimals can be considered to be a form of archimedeanness, and clearly any almost archimedean order-unit space has no non-zero infinitesimals. 

We can show that these inclusions are strict using examples in 2-dimensional space (so the difference between these has nothing to do with infinite-dimensional considerations). The first and last example are also found in \cite[\S 1.3 Examples]{jameson}. 

\begin{example}\hfill
\label{NonarchimedeanExamples}
\begin{enumerate}[(i)]
\item Define $A = \R^2$, $A_+ = \{ (x,y) \in \R^2 \mid (x \geq 0 \text{ and } y = 0) \text{ or } y > 0 \}$, and $u = (0,1)$. Then $(A,A_+,u)$ is a non-archimedean order-unit space. We have $(x,y) \leq (x',y')$ iff ($x \leq x'$ and $y= y'$) or $y < y'$. So the order is the reverse lexicographic ordering. Therefore this space is known as the \emph{lexicographic plane}. The element $(1,0)$ is a non-zero infinitesimal.
\item Define $B = \R^2$, $B_+ = \{ (x,y) \in \R^2 \mid y > 0 \text{ or } (x,y) = (0,0) \}$ and $v = (0,1)$. Then $(B,B_+,v)$ is a non-archimedean order-unit space. We have $(x,y) \leq (x',y')$ iff $(x,y) = (x',y')$ or $y < y'$. This space is not almost archimedean (the point $(1,0)$ has norm $0$) but has no non-zero infinitesimals. 
\item Define $C = \R^2$, $C_+ = \{ (x,y) \in \R^2 \mid (x,y) = (0,0) \text{ or } (x > 0 \text{ and } y > 0) \}$ and $w = (1,1)$. Then $(C,C_+,w)$ is a non-archimedean order-unit space. We have $(x,y) \leq (x',y')$ iff $(x,y) = (x',y')$ or $x < x'$ and $y < y'$. This space is non-archimedean (the point $(0,-1) \leq \frac{1}{n}w$ for all $n \in \N$, but is not $\leq 0$), but is almost archimedean. 
\end{enumerate}
\end{example}
\begin{proof}
In each case, we leave it is an exercise to the reader to show that $A_+,B_+,C_+$ are cones and $u,v,w$ strong order-units. 
\begin{itemize}
\item In $A$, $(1,0)$ is infinitesimal:

As $1 \geq 0$, $(1,0) \in A_+$. Let $n \in \N$. We want to show that $(1,0) \leq \frac{1}{n}u$, or equivalently $(n,0) \leq (1,0)$, which is equivalent to $(-n,1) \in A_+$. This follows from the fact that $1 > 0$. 

\item In $B$, $(1,0)$ has norm $0$, so $(B,B_+,v)$ is not almost archimedean:

Let $n \in \N$. We have $\frac{1}{n}(0,1) - (1,0) = (-1,\frac{1}{n}) \in B_+$, so $(1,0) \leq \frac{1}{n}v$, and similarly $\frac{1}{n}(0,1) + (1,0) = (1, \frac{1}{n}) \in B_+$. Therefore $(0,1) \in \left[-\frac{1}{n}v,\frac{1}{n}v\right]$ for all $n \in \N$, so $\|(0,1)\|_{[-v,v]} = 0$, and $(B,B_+,v)$ is not almost archimedean. 

\item $(B,B_+,v)$ has no non-zero infinitesimals:

Let $(x,y) \in B_+$ be a non-zero element, so that $y > 0$. The number $\frac{1 + y}{y} > 0$, and so $\frac{1 + y}{y}(x,y) - (0,1) = \left(\frac{1 + y}{y}x, y\right) \in B_+$ because $y > 0$, so $\frac{1 + y}{y}(x,y) \geq v$, and we can take $n = \lceil \frac{1+ y}{y} \rceil$ to show $n (x,y) \geq v$, and therefore $(x,y) \geq \frac{1}{n}v > \frac{1}{n+1}v$, so $(x,y)$ is not an infinitesimal. 

\item $(C,C_+,w)$ is almost archimedean:

Let $(x,y) \in \R^2$ and suppose that $-\frac{1}{n}w \leq (x,y) \leq \frac{1}{n} w$ for all $n \in \N$. Then $\frac{1}{n} < x, y < \frac{1}{n}$ for all $n \in \N$, so $x = y = 0$. 

\item In $C$, $(-1,0) \leq \frac{1}{n}w$ for all $n \in \N$, but $(-1,0) \not\leq 0$, so $(C,C_+,w)$ is not archimedean:

First, it is clear that $(-1,0) \not\leq 0$ because $(1,0) \not\in C_+$ because $0 \not > 0$. Now, as $\frac{1}{n} > 0$, we have $\frac{1}{n}(1,1) - (-1,0) = (1 + \frac{1}{n}, \frac{1}{n}) \in C_+$, so $(-1,0) \leq \frac{1}{n}(1,1)$. 
\end{itemize}
This shows that $A, B$ and $C$ form a set of counterexamples separating these notions, as well as showing that infinitesimals are possible. 
\end{proof}

We now have the definitions and concepts to characterize which form of archimedeanness is necessary and sufficient for automatic $\R$-linearity of group homomorphisms. 

\begin{proposition}
Let $(A,A_+,u)$ be an order-unit space. The following are equivalent:
\begin{enumerate}[(i)]
\item $(A,A_+,u)$ is almost archimedean.
\item For all order-unit spaces $(B,B_+,v)$, every positive unital group homomorphism $f : (B,B_+,v) \rightarrow (A,A_+,u)$ is $\R$-linear. 
\item For all effect modules $B$, every effect \emph{algebra} homomorphism $B \rightarrow [0,1]_A$ is an effect \emph{module} homomorphism. 
\item Every positive unital group homomorphism $f : (\R,\Rgeq, 1) \rightarrow (A,A_+,u)$ is $\R$-linear (and therefore equal to the unique map that takes $\alpha \in \R$ to $\alpha u \in A$). 
\end{enumerate}
\end{proposition}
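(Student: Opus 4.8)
The plan is to prove the cycle of implications $(i) \Rightarrow (ii) \Rightarrow (iii) \Rightarrow (iv) \Rightarrow (i)$. The first three implications essentially repackage the proof of Proposition~\ref{OUSFullnessProp} together with the order-unit/effect-module equivalence of \cite{JacobsM12b}, and only $(iv) \Rightarrow (i)$ requires a genuinely new construction. For $(i) \Rightarrow (ii)$ I would re-run the proof of Proposition~\ref{OUSFullnessProp}, tracking where archimedeanness is actually used. Given a positive unital group homomorphism $f : (B,B_+,v) \to (A,A_+,u)$, with $B$ an arbitrary (possibly non-archimedean) order-unit space, the steps showing $f$ is $\Q$-linear and continuous at $0$ use only that $v,u$ are strong order units: positivity and unitality give $f([-\tfrac{1}{n} v, \tfrac{1}{n} v]) \subseteq [-\tfrac{1}{n} u, \tfrac{1}{n} u]$, and the families $\{[-\tfrac{1}{n} v, \tfrac{1}{n} v]\}_n$ and $\{[-\tfrac{1}{n} u, \tfrac{1}{n} u]\}_n$ are neighbourhood bases of $0$ for the (possibly merely semi-normed) order-unit topologies, so $f$ is continuous. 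The final passage $f(\alpha a) = \lim_i f(\alpha_i a) = \lim_i \alpha_i f(a) = \alpha f(a)$ for rationals $\alpha_i \to \alpha$ then needs the limit in $A$ to be unique, \emph{i.e.}\ the order-unit semi-norm on $A$ to be a genuine norm --- which is exactly almost archimedeanness of $A$.

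For $(ii) \Rightarrow (iii)$ I would realize an effect module as $[0,1]_{B'}$ for an order-unit space $B'$ via \cite{JacobsM12b}; an effect algebra map $f : [0,1]_{B'} \to [0,1]_A$ extends to a positive unital group homomorphism $\tilde f : B' \to A$ by the archimedean-free part of the proof of Proposition~\ref{OUSFullnessProp}, which by $(ii)$ is $\R$-linear, hence preserves multiplication by scalars from $[0,1]$, so $f = [0,1]_{\tilde f}$ is an effect module homomorphism. For $(iii) \Rightarrow (iv)$: $[0,1]$ is an effect module, so a positive unital group homomorphism $g : (\R,\Rgeq,1) \to (A,A_+,u)$ restricts to an effect algebra map $[0,1] \to [0,1]_A$, which by $(iii)$ is an effect module map; applied to the element $1 \in [0,1]$ this gives $g(\alpha) = \alpha g(1) = \alpha u$ for $\alpha \in [0,1]$, and then $g(\beta) = \beta u$ for all $\beta \in \Rgeq$ (write $\beta = n\alpha$ with $n \in \N$ and $\alpha \in [0,1]$, and use additivity) and all $\beta \in \R$ (use negation), so $g$ is $\R$-linear.

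The crux, and the step I expect to be the main obstacle, is $(iv) \Rightarrow (i)$, which I would prove by contraposition: if $A$ is not almost archimedean, I construct a positive unital group homomorphism $\R \to A$ that is not $\R$-linear. Fix $a \in A$, $a \neq 0$, with $\|a\|_{[-u,u]} = 0$; equivalently $-\tfrac{1}{n} u \leq a \leq \tfrac{1}{n} u$ for all $n \in \N$, and hence also $-\tfrac{1}{n} u \leq ta \leq \tfrac{1}{n} u$ for every $t \in \R$ and $n \in \N$, since $\|ta\|_{[-u,u]} = |t|\,\|a\|_{[-u,u]} = 0$. Using a Hamel basis of $\R$ over $\Q$ containing $1$, choose a $\Q$-linear functional $\phi : \R \to \R$ with $\phi(1) = 0$ but $\phi \not\equiv 0$, and define $g(\alpha) = \alpha u + \phi(\alpha) a$. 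Then $g$ is additive (by additivity of $\phi$), $g(1) = u$ (since $\phi(1) = 0$), and $g$ is positive: $g(0) = 0 \in A_+$, and for $\alpha > 0$ pick $n \in \N$ with $|\phi(\alpha)|/n \leq \alpha$, so that $\phi(\alpha) a \geq -\tfrac{|\phi(\alpha)|}{n} u \geq -\alpha u$ and therefore $g(\alpha) = \alpha u + \phi(\alpha) a \in A_+$. But for any $\alpha_0$ with $\phi(\alpha_0) \neq 0$ we have $g(\alpha_0) = \alpha_0 u + \phi(\alpha_0) a \neq \alpha_0 u = \alpha_0 g(1)$, since $\phi(\alpha_0) a \neq 0$, so $g$ is not $\R$-linear, contradicting $(iv)$. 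The structural point making this work is that an infinitesimal has order-unit semi-norm $0$, so the perturbation $\phi(\alpha) a$ is dominated by $\varepsilon u$ for every $\varepsilon > 0$ and hence cannot destroy positivity, whereas a wild $\Q$-linear $\phi$ vanishing at $1$ destroys $\R$-linearity. Finally, the parenthetical in $(iv)$ is immediate: an $\R$-linear unital map $\R \to A$ must send $\alpha = \alpha \cdot 1$ to $\alpha u$.
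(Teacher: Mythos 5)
Your proposal is correct and follows essentially the same route as the paper: the same tracking of where almost-archimedeanness enters the fullness proof for (i)$\Rightarrow$(ii), the same use of the effect-module/order-unit equivalence for (ii)$\Rightarrow$(iii), and the same Hamel-basis perturbation $\alpha \mapsto \alpha u + \phi(\alpha)a$ for $\lnot$(i)$\Rightarrow\lnot$(iv) (the paper's map is exactly your $g$ with $\phi$ the coefficient-of-$\xi$ functional). The only organizational difference is that you close a single cycle via a short direct (iii)$\Rightarrow$(iv) argument, where the paper instead proves (ii)$\Leftrightarrow$(iii) in both directions and gets (ii)$\Rightarrow$(iv) \emph{a fortiori}; both are complete.
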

\begin{proof}\hfill
\begin{itemize}
\item (i) $\Rightarrow$ (ii):

The proof of fullness in Proposition \ref{OUSFullnessProp} does not use any archimedeanness in showing that a positive unital group homomorphism $(B,B_+,v) \rightarrow (A,A_+,u)$ is $\Q$-linear. We also have that, for any absolutely convex absorbent set $U$, $\{\frac{1}{n}U\}_{n \in \N}$ is a neighbourhood base for $0$ with respect to the $\|\blank\|_U$-topology, even though $U$ might not be the closed unit ball of $\|\blank\|_U$. The proof of continuity of positive unital group homomorphisms goes through without any archimedeanness assumption. However, if $\|\blank\|_{[-u,u]}$ is only a semi-norm, the limits of convergent sequences are not unique\footnote{The topology defined by a semi-norm $\|\blank\|$ is Hausdorff iff $\|\blank\|$ is a norm.}. But as $(A,A_+,u)$ is almost archimedean, $\|\blank\|_{[-u,u]}$ is a norm, limits of sequences are unique, and so the proof of $\R$-linearity from continuity in Proposition \ref{OUSFullnessProp} works. 
\item (ii) $\Leftrightarrow$ (iii):
Every effect module $B$ is isomorphic, as an effect module, to the unit interval of an order-unit space $(C,C_+,v)$ (see \cite[\S 3.1, Theorem 3]{JacobsM12b} or \cite{PulmannovaG98}). We take $i : B \rightarrow [0,1]_C$ to be such an isomorphism. As the $[0,1]_\blank$ functor $\OUS \rightarrow \EMod$ is full and faithful \cite[\S 3.1 Theorem 3]{JacobsM12b}, we can reason as follows. If (ii) is true and $f : B \rightarrow [0,1]_A$ is an effect algebra homomorphism, then by the first part of the fullness proof in Proposition \ref{OUSFullnessProp}, $f \circ i^{-1}$ extends to a positive unital group homomorphism $C \rightarrow A$. By (ii), this is a linear map, so $f \circ i^{-1}$ is an effect module homomorphism, so $f \circ i^{-1} \circ i = f$ is as well, proving (iii).

If (iii) is true, and $f : (C,C_+,v) \rightarrow (A,A_+u)$ is a positive unital group homomorphism, then $[0,1]_f : [0,1]_C \rightarrow [0,1]_A$ is an effect algebra homomorphism, and therefore an effect module homomorphism by (iii), and so preserves multiplications by real numbers in $[0,1]$. If $0 \neq \alpha \in \R$, define $n = \lceil |\alpha| \rceil$, so $\frac{\alpha}{n} \in [0,1]$. Then
\[
f(\alpha x) = f\left(n \frac{\alpha}{n} x\right) = n f\left(\frac{\alpha}{n}x\right) = n \frac{\alpha}{n} f(x) = \alpha f(x),
\]
so $f$ is $\R$-linear, and (ii) is true. 
\item (ii) $\Rightarrow$ (iv): \emph{A fortiori}.
\item $\lnot$(i) $\Rightarrow$ $\lnot$(iv):

Let $\xi \in [0,1]$ be an irrational number, such as $\frac{1}{\sqrt{2}}$. By Zorn's lemma, we can extend $\{1, \xi\}$ to a maximal $\Q$-linearly independent set in $\R$. This means that we can find a $\Q$-basis $(b_i)_{i \in I}$ for $\R$ where $b_0 = 1$ and $b_1 = \xi$ (the other $b_i$ necessarily being irrational, like $\xi$). As (i) is false, $(A,A_+,u)$ is not almost archimedean, so there exists a non-zero element $a \in A$ such that $-\frac{1}{n}u \leq a \leq \frac{1}{n} u$ for all $n \in \N$. Define a $\Q$-linear map $f : \R \rightarrow A$ on basis vectors by $f(b_0) = u$, $f(b_1) = \xi u + a$, and $f(b_i) = b_i u$ for all other $i \in I$. Therefore $f$ is a $\Q$-linear unital map, and we only need to show that it is positive and not $\R$-linear. 

Suppose $\beta \in \Rgeq$ and $b_1$ never occurs in its expansion in the basis $(b_i)_{i \in I}$, \emph{i.e.} $\beta = \sum_{i \in J} \beta_i b_i$, where $J$ is a finite subset of $I$ not containing $1$. Then
\[
f(\beta) = f\left(\sum_{i \in J}\beta_ib_i\right) = \sum_{i \in J}\beta_ib_iu = \beta u \geq 0
\]
because $A_+$ is a cone and $u \in A_+$. 

If, on the other hand, $b_1$ does occur in the expansion of $\beta \geq 0$, then we have
\[
f(\beta) = f\left(\beta_1b_1 + \sum_{i \in J\setminus\{1\}}\beta_ib_i\right) = \beta_1b_1u + \beta_1a + \sum_{i \in J \setminus \{1\}} \beta_ib_i u = \beta u + \beta_1 a
\]
Now, since $-\frac{1}{n}u \leq a \leq \frac{1}{n}u$ for all $n \in \N$, we have $- \gamma u \leq a \leq \gamma u$ for all $\gamma \in \Rgt$. If $\beta_1 > 0$, then $\frac{\beta}{\beta_1} > 0$, so $-\frac{\beta}{\beta_1}u \leq a$, which implies $\beta u + \beta_1 a \in A_+$. If $\beta_1 < 0$, then $-\frac{\beta}{\beta_1} > 0$, so $a \leq -\frac{\beta}{\beta_1}u$. Multiplying through by $-\beta_1$ and rearranging, we get $\beta u + \beta_1 a \in A_+$. Therefore $f(\beta) \in A_+$ in either case, as required. 

Since we have shown that $f$ is a $\Q$-linear positive unital map, all that remains is to show that it is not $\R$-linear. We have $\xi \cdot f(1) = \xi \cdot u$, and $f(\xi \cdot 1) = \xi \cdot u + a$. These are not equal because $a \neq 0$. So (iv) is false.
\end{itemize}
Therefore (iv) $\Rightarrow$ (i) by contraposition.
\end{proof}

Note that the above shows that if an effect algebra $A$ has two effect module structures, one of which embeds as an effect module in an almost archimedean order-unit space, then the identity map $A \rightarrow A$ is an effect module isomorphism between them, so they are both the same. 

\end{appendices}
\end{document}